\newcommand\be{\begin{equation}}
\newcommand\ee{\end{equation}}
\newcommand\ber{\begin{eqnarray}}
\newcommand\eer{\end{eqnarray}}
\newcommand\berr{\begin{eqnarray*}}
\newcommand\eerr{\end{eqnarray*}}
\newcommand\bea{\begin{eqnarray}}
\newcommand\eea{\end{eqnarray}}
\newcommand\ba{\begin{array}}
\newcommand\ea{\end{array}}
\newcommand\bfR{\mathbb{R}}
\newcommand\dd{\mathrm{d}}
\newcommand\lm{\lambda}\newcommand\om{\omega}
\newcommand\ii{\mbox{i}}
\newcommand\re{\mathrm{e}}\newcommand\e{\mathrm{e}}
\newcommand\eq{\eqref}\newcommand\lb{\label}
\newcommand\ri{\mathrm{i}}
\newcommand\pa{\partial}
\newcommand{\nn}{\nonumber}
\newcommand{\bfZ}{{\Bbb Z}}
\newcommand\vp{\varphi}
\newcommand{\vep}{\varepsilon}
\newcommand{\bi}{\begin{itemize} }
  \newcommand{\ei}{\end{itemize} }
\newtheoremstyle{mythm}{1.5ex plus 1ex minus .2ex}{1.5ex plus 1ex
minus .2ex}{\kai}{\parindent}{\song\bfseries}{}{1em}{}
\numberwithin{equation}{section}\numberwithin{figure}{section}
\newtheorem{theorem}{Theorem}[section]
\newtheorem{lemma}{Lemma}[section]
\begin{document}
\title{Bogomol'nyi Equations and Coexistence of Vortices\\ and Antivortices in Generalized Abelian Higgs Theories}
\author{Aonan Xu \\ School of Mathematics and Statistics\\Henan University\\
Kaifeng, Henan 475004, P. R.  China\\ \\ Yisong Yang \\ Courant Institute of Mathematical Sciences\\ New York University, New York }

\date{}
\maketitle

\begin{abstract}
We derive the Bogomol'nyi equations in generalized Abelian Higgs theories which allow the coexistence of vortices and antivortices
over a compact Riemann surface or the full plane.
In the compact surface situation, we obtain a necessary and sufficient condition for the existence of a unique solution describing a system of coexisting vortices and antivortices.
In the full-plane situation, we prove the existence of a unique solution representing an arbitrary distribution of vortices and antivortices and obtain sharp asymptotic behavior of the solution near infinity. These solutions carry quantized magnetic fluxes and energies explicitly expressed in terms of the numbers of vortices and antivortices topologically characterized by the first Chern and
Thom classes.

\medskip

{\bf Keywords}: Bogomol'nyi equations, vortices and antivortices, generalized Abelian--Higgs theories,  topological invariants, quantized energy, functional analysis

\medskip

{\bf PACS numbers}. 02.30.Jr, 02.30.Xx, 11.15.-q, 74.25.Ha

\medskip

{\bf MSC numbers}. 35J50, 53C43, 81T13

\end{abstract}

%{\bf PACS numbers.} 02.30.Jr, 02.30.Xx, 11.15.-q, 74.25.Ha
\medskip

%{\bf MSC numbers.} 35J50, 53C43, 81T13
\section{Introduction}
The concept of vortices in quantum field theory has its origins in the study of type II superconductors pioneered by Abrikosov \cite{Ab}. These vortices arise as stable, localized topological defects due to the interplay of spontaneous symmetry breaking and the topological properties of the underlying gauge theory. This phenomenon has far-reaching implications in condensed matter physics, high-energy physics, and mathematics (particularly in differential geometry and topology).
In gauge theories, vortex solutions appear naturally when a gauge field couples to a charged scalar field with spontaneous symmetry breaking. This is exemplified in the Abelian-Higgs model, where solutions to the coupled equations of motion lead to stable, localized vortex configurations. Mathematically, these vortices arise as
solutions to the Bogomol'nyi \cite{Bo} type equations in the self-dual limit
 classified by topological invariants, for example, the first Chern number in fiber bundle language, and
 are closely related to the study of holomorphic vector bundles, moduli spaces, and geometric analysis.
Physically, vortices also appear as cosmic strings, which are relevant in early universe models.
Moreover, recent studies suggest that vortices play a crucial role in understanding the confinement mechanism of monopoles in non-Abelian gauge theories. This has direct implications for the quark confinement puzzle in quantum chromodynamics, where vortex condensation is proposed as a mechanism leading to a linear confinement mechanism via dual colored superconductivity
\cite{Konishi1,Konishi2,ShY1,ShY2}.

In some gauge field theories, vortices and antivortices can coexist, leading to rich physical structures and interactions \cite{Bot,Chae,Can,Her,Ruo}. Their mutual attraction or repulsion depends on the specific topological charge and gauge interactions, influencing phase transitions and dynamical properties of the field configurations. The study of vortex-antivortex pairs contributes to understanding duality transformations and the non-perturbative aspects of gauge theories.
On the other hand, however,
analyzing the governing equations for a vortex-antivortex system is mathematically challenging due to several difficulties including
nonlinearity of the field equations,
coupling of multiple fields,
topological constraints associated with
vortices and antivortices having opposite winding numbers and leading to topological charge cancellation, and
long-range interactions between
vortex-antivortex pairs.
Due to these mathematical difficulties, most studies rely on numerical simulations or approximations such as the Bogomol'nyi limit, to understand vortex-antivortex dynamics. In fact, these approaches are also what have been taken in the studies of single-species vortex systems in various theories. The goal of this study is to
obtain a broad family of Bogomol'nyi equations in generalized Abelian--Higgs theories that allow the coexistence of an arbitrarily distributed system of vortices and antivortices.

This work is initiated from two field-theoretical origins. The first origin is the gauged $\sigma$-model of Schroers \cite{Sch1,Sch2} in which the complex Higgs field representation of the underlying
$O(3)$-map leads to the coexistence of the zeros and poles of the complex field which give rise to the concentrated magnetic field and energy density and opposite local and global
topological properties \cite{Sch1,Sch2,SSY,Y1,Y2}. The second origin is the generalized Abelian Higgs theories of Lohe \cite{Lo1,Lo2} which enable one to achieve the same
Bogomol'nyi reduction for systems with general Higgs potentials \cite{Adam,Alo,Baz}. In particular, in such a formalism, one may regard the gauged $\sigma$-model as a special example. It is this connection that motivates our work here which aims to obtain the Bogomol'nyi equations in generalized Abelian Higgs theories, allowing the coexistence of vortices and antivortices with topological characteristics, in a unified treatment.

An outline of the main content of the paper is as follows. In Section 2, we introduce our generalized Abelian Higgs theories. In Section 3, we derive the generalized Bogomol'nyi equations
and show that the solutions are characterized by the first Chern class and the Thom class of the underlying Hermitian line bundle. In Section 4, we come up with the nonlinear elliptic equation
that governs the vortices and antivortices.
This equation will then be studied in Sections 5 and 6 and sharp existence and uniqueness theorems for the solutions representing coexisting vortices and antivortices over a compact Riemann surface
and the full plane will be established. In the compact surface situation, we use a fixed-point theory argument to prove the existence of solutions. In the full plane situation, we use
a sub- and supersolution method.
In Section 7, we conclude with some remarks.

\section{Generalized Abelian Higgs theory over a Riemann surface}

The field-theoretic problem we consider here has a characteristic feature of being topological which can be formulated in the framework of a Hermitian line bundle $L$ over a Riemann
surface $S$. Within this setting, let $u$ be a cross-section resembling a complex scalar field, or the Higgs field, and $A$ a connection 1-form which is real-valued and defines the connection
or the covariant differentiation on $u$ by
\be
D_A u=\dd u-\ii Au,
\ee
giving rise to the curvature 2-form $F_A=\dd A$. Use $*$ to denote the Hodge dual operator. Then the Hamiltonian energy density of the Abelian Higgs theory, in the critical BPS coupling
\cite{Bo,PS,R}, is
\be\lb{2.2}
{\cal H}(A,u)=\frac12 *(F_A\wedge *F_A)+\frac12*(D_A u\wedge *\overline{D_A u})+\frac18 (1-|u|^2)^2.
\ee
A complete understanding of the multivortex solutions topologically characterized by the first Chern class of the bundle $L\to S$ given by
\be
c_1(F)=\frac1{2\pi}\int_S F_A =M,\quad M\in\bfZ,
\ee
and minimizing the total energy functional associated with \eq{2.2}, has been achieved \cite{Bra,JT,N1,N2}, where the topological  integer $M$ is the total vortex number which counts
the algebraic number
of zeros of the field $u$.

In \cite{SSY,Y1,Y2}, the Abelian Higgs theory \eq{2.2} is extended such that the Hamiltonian energy density assumes the form
\be\lb{2.4}
{\cal H}(A,u)=\frac12 *(F_A\wedge *F_A)+\frac2{(1+|u|^2)^2}*(D_A u\wedge *\overline{D_A u})+\frac12\left(\frac{1-|u|^2}{1+|u|^2}\right)^2.
\ee
This theory originates from the gauged $\sigma$-model pioneered by Schroers \cite{Sch1,Sch2} and enjoys many distinguishing features, including
\begin{enumerate}

\item[(i)] The energy density \eq{2.4} formally returns to \eq{2.2} in the limit $|u|^2\to 1$.

\item[(ii)] The energy density \eq{2.4} is finite at the poles of $u$ as well such that the solutions of the theory are characterized by both poles and zeros of the field $u$. The zeros and poles
of $u$ give rise to the spots where $F_A$, or $*F_A$ rather, equivalently, viewed as the vorticity fields attains its global maxima and minima and thus are identified as vortices and antivortices. In this situation, the
energy of the system is proportional to the sum of the number of vortices and number of antivortices.

\item[(iii)] The model \eq{2.4} accommodates {\em coexisting} vortices and antivortices as well as {\em single-species} vortices or antivortices.

\item[(iv)] As in the Abelian Higgs theory \eq{2.2}, this theory also possesses a spontaneously broken $U(1)$ symmetry at $|u|^2=1$. In fact, such a symmetry is enlarged slightly into
a $U(1)\times{\bfZ}_2$ type of the form:
\be
(A,u)\mapsto (A+\dd\chi,u\e^{\ii\chi}),\quad (A,u)\mapsto\left(-A,\frac1u\right),
\ee
where $\chi$ is a real scalar field.

\item[(v)] In order to take account of antivortices, another topological invariant, the Thom class, emerges together with the first Chern class as characterizations of the solutions.

\item[(vi)] In the compact situation, it is known that a necessary and sufficient condition for existence of an $M$-vortex solution for the theory \eq{2.2} is that $M$ stays below an explicit bound, called the Bradlow
bound or limit \cite{Gud,Manton}. For the theory \eq{2.4}, on the other hand, the condition is replaced by that the difference of the number of vortices, $M$, and the number of antivortices, $N$, stays below
an explicit bound \cite{SSY}. That is, in this situation, both $M$ and $N$ are allowed to be arbitrarily large, as far as $|M-N|$ stays below a certain bound. In particular, for \eq{2.2}, the
associated energy is
bounded by the total surface area, but for \eq{2.4}, the associated energy is not bounded by the total surface area since it is proportional to $M+N$ which can be arbitrarily large regardless what
the surface area is.
\end{enumerate}

Note that the line bundle formalism \cite{GH} is naturally suited to describe a system of vortices: local vortex configurations are defined over coordinate patches and glued together via transition functions, with their global compatibility governed by the first Chern class. This ensures topological consistency of the phase winding across the entire surface.

It is interesting to note that \eq{2.4} is a special example of the generalized Abelian Higgs theory developed much earlier by Lohe \cite{Lo1} (see also \cite{Lo2,Lo3}) with the Hamiltonian energy density
\be\lb{2.6}
{\cal H}(A,u)=\frac12 *(F_A\wedge *F_A)+\frac12F(|u|^2)*(D_A u\wedge *\overline{D_A u})+\frac12w(|u|^2)^2,
\ee
where $F$ and $w$ are some functions to be determined, which are not free to pick but mutually related in a specific manner in order to accommodate multivortices. Varying $A$ and $u$ in
\eq{2.6}, we obtain the associated Euler--Lagrange equations of the theory:
\bea
\dd * F_A&=&- F(|u|^2)\frac\ii2(u\overline{D_A}u-\overline{u}D_Au),\lb{a2.7}\\
D_A*(F(|u|^2) D_A u)&=&(F'(|u|^2) *(D_A u\wedge *\overline{D_A u})+2w(|u|^2)w'(|u|^2))u.\lb{a2.8}
\eea

Since in \eq{2.6} the last
term represents the potential density function
\be
V(|u|^2)\equiv \frac12 w(|u|^2)^2,
\ee
we see that the presence of spontaneous broken symmetry as that in \eq{2.2} or \eq{2.4} imposes the condition
\be\lb{xx2.10}
w(1)=0,
\ee
say,  such that the vacuum manifold is realized as $\{u=\e^{\ii\theta}\,|\,\theta\in\bfR\}$. With the generalized energy density \eq{2.6}, we have flexibility in choosing the potential density
function. This freedom is often advantageous and useful in applications \cite{Av,Cas,Ghosh,Izq}.

Motivated by these earlier works and applications, we study the coexistence of vortices and antivortices of the generalized Abelian Higgs theory \eq{2.6} by exploiting its Bogomol'nyi equations
structure \cite{Bo,PS} and the underlying nonlinear partial differential equations.

\section{Bogomol'nyi equations, vortex numbers, and minimum energy}

First recall the identities
\bea
|D_A u|^2&=&*(D_A u\wedge *\overline{D_Au}),\\
D_A u\wedge*\overline{D_A u}+*D_A u\wedge\overline{D_A u}&=&(D_A u\pm\ii*D_Au)\wedge *\overline{(D_Au\pm\ii*D_Au)}\nn\\
&&\pm\ii(D_Au\wedge\overline{D_Au}-*D_A u\wedge *\overline{D_Au}).
\eea
Hence we can rewrite \eq{2.6} as
\bea\label{2.8}
\mathcal{H}(A,u)
&= & \frac{1}{2}\left| F_A \mp *w(|u|^2)\right|^{2} \pm * F_{A} w(|u|^2)\nn\\
&&+\frac14{F(|u|^2)} \left(\left| D_A u\pm\ii*D_Au\right|^{2} \pm \mathrm{i}*(D_Au\wedge\overline{D_Au}-*D_A u\wedge *\overline{D_Au})\right)\nn \\
&= & \frac{1}{2}\left| F_A \mp *w(|u|^2)\right|^{2} + \frac14 F(|u|^2)\left| D_A u\pm\ii*D_Au\right|^{2} \pm  * F_A \nn\\
& &\pm \left(( w(|u|^2)-1)* F_A+\frac{\mathrm{i}}4F(|u|^2) *(D_Au\wedge\overline{D_Au}-*D_A u\wedge *\overline{D_Au}) \right),\quad\quad
\eea
 On the right-hand side above, the first two terms are quadratic and the third term gives rise to the first Chern class as mentioned. So it remains to recognize the last few terms.

For this purpose, we introduce a new current density
\be\label{2.9}
 J= \mathrm{i}f(|u|^2)\left(u \overline{D_A u}-\overline{u} D_A u\right).
\end{equation}
Applying the commutator relation
\begin{equation}
[D_{i},D_{j}]u=(D_{i}D_{j}-D_{j}D_{i})u=-\mathrm{i}F_{ij}u
\end{equation}
in local coordinates to \eqref{2.9}, we obtain
\begin{equation}\label{2.10}
\dd J=-2|u|^2 f(|u|^2) F_A+\mathrm{i}\left(f(|u|^2)+f^{\prime}(|u|^2) |u|^2\right) \left(D_Au\wedge\overline{D_Au}-*D_A u\wedge *\overline{D_Au}\right).
\end{equation}
Equalizing \eqref{2.10} with the last term on the right-hand side of \eqref{2.8}, we derive the identities
\begin{equation}\label{2.11}
   w(s)-1=-2sf(s), \quad F(s)=4\left(f(s)+f^{\prime}(s) s\right).
\end{equation}
Thus we conclude that there holds
\begin{equation}\label{2.12}
w(s)=\frac12 \int_{s}^{1} F(\rho) \mathrm{d} \rho, \quad \text { or equivalently, } \quad F(s)=-2 w^{\prime}(s), \quad w(1)=0.
\end{equation}
From \eqref{2.12}, we observe that we arrive at the normalization condition \eq{xx2.10} such that  the boundary condition
\begin{equation}\label{a2.12b}
    \left|u\right|^{2}=1
\end{equation}
is  imposed as the vacuum manifold as desired. Combining \eqref{a2.12b} with the first equation of \eqref{2.11}, we obtain the condition
\begin{equation}\lb{aa3.10}
    f(1)=\frac{1}{2}.
\end{equation}
Besides, the structure of the energy density \eq{2.6}  indicates that $F(s)\geq0$ which implicates in view of \eq{2.11}  that $f(s)$ must satisfy the condition
\be\lb{aa3.11}
(sf(s))'\geq0.
\ee
Consequently, we see that the function $w(s)$ in \eq{2.11} is monotone decreasing. Furthermore, since we aim to develop a field theory that will accommodate both vortices and antivortices,
which are to be realized by the zeros and poles of $u$, we should require that $w(s)$ be regular at both $s=0$ and $s=\infty$. The latter condition implies that the transformed function
\be
h(t)=w\left(\frac1t\right)
\ee
is differentiable for $t$ near $t=0$. This property leads us to the property
\be\lb{aa3.13}
w(s)=h(0)+\frac{h'(0)}s+\mbox{O}(s^{-2}),\quad s\gg1.
\ee
On the other hand, the first relation in \eq{2.11} gives us $w(0)=1$.  Therefore, if  vortices and antivortices are treated on equal footing energetically, the potential energy density
should assume the same value at $s=0$ and $s=\infty$ which leads to the condition $w(\infty)=-1$, giving rise to $h(0)=-1$ in \eq{aa3.13}. Thus, we obtain
\be\lb{aa3.14}
\lim_{s\to\infty} sf(s)=1.
\ee
The properties \eq{aa3.11} and \eq{aa3.14} are natural conditions that the function $f(s)$ linking the functions $F(s)$ and $w(s)$ should satisfy, which will be observed in our subsequent
work.

As an example, note that the special situation
\be\lb{a3.11}
 f(s)=\frac1{1+s},\quad w(s)=\frac{1-s}{1+s},\quad F(s)=\frac4{(1+s)^2}
\ee
returns to the classical model \eq{2.4}

Summarizing these results, we rewrite \eq{2.8} as
\be\lb{a3.12}
{\cal H}(A,u)=\frac{1}{2}\left| F_A \mp *w(|u|^2)\right|^{2} + \frac14 F(|u|^2)\left| D_A u\pm\ii*D_Au\right|^{2} \pm * \left(  F_A+\dd J\right).
\ee
Hence we arrive at the following energy lower bound
\bea\lb{a3.13}
E(A,u)&=&\int_{S}*{\cal H}(A,u)\nn\\
&\geq& |T|,
\eea
where the quantity
\be
T=\int_{S}\left(F_A+ \dd J\right)
\ee
is topological which will be shown to be proportional to the sum of vortex and antivortex numbers,  $M$ and $N$. That is, $T=2\pi (M+N)$, to be precise.

Thus we see that the topological energy lower bound \eq{a3.13} is attained if and only if the two quadratic terms in \eqref{a3.12} both vanish:
\bea
*F_A= \pm  w(\left|u\right|^{2}), \lb{2.13}\\
D_{A}u\pm  \mathrm{i}*D_{A}u = 0. \lb{2.14}
\eea
These are the Bogomol'nyi equations we set forth to derive.
Since the solutions of \eq{2.13} and \eq{2.14} are the minimizers of the energy functional \eq{a3.13}, stratified by the stated topological lower bounds, they automatically solve the original Euler--Lagrange equations
\eq{a2.7} and \eq{a2.8}. These Bogomol'nyi equations are first-order differential equations and thus a significant reduction of the original second-order equations, \eq{a2.7} and \eq{a2.8}.
In the subsequent study, we shall focus on \eq{2.13} and \eq{2.14}.
In symplectic geometry, this type of equations belong to a family of equations referred to as Hitchin's equations \cite{D,H1,H2}
such that the first equation relates the curvature $F_A$ of the line bundle connection $A$ to the moment map of the $U(1)$ gauge group action
on the pair $(A,u)$ where $u$ is a meromorphic section. In \cite{RS}, Rom\~{a}o and Speight present a study on the geometry of
the moduli space of vortices and antivortices associated with the classical model \eq{2.4} or \eq{a3.11}. It would be interesting to investigate that,
to what extent, their work may be extended to the generalized system of equations \eq{2.13} and \eq{2.14}, and whether the flexibility of the nonlinearity
in $w(s)$ can be exploited advantageously. In Section \ref{sec5}, we further comment on the possible
physical meaning and applications of the solutions to these equations, especially with regard to spontaneous vortex formation and the associated
symmetry breaking phenomenon by an
externally exerted magnetic field and a type-II superconductivity interpretation in the general setting. In short, in various situations, the freedom in choosing the nonlinear function
$w$ in \eq{2.13} should enable the treatment of a wider range of geometric and physical applications.

\section{Elliptic equation governing vortices and antivortices}

In local isothermal coordinates $(x_1,x_2)$ over $S$ in which the line element of $S$ is given in a conformally flat manner by $\dd s^2=\eta(x_1,x_2)(\dd x_1^2+\dd x_2^2)$ where $\eta(x_1,x_2)$ is a smooth positive-valued function, the equation \eq{2.14} takes the form
\be
(\pa_1\pm\ii\pa_2) u=\ii(A_1\pm\ii A_2) u,\quad  A=A_1\dd x_1+A_2\dd x_2.
\ee
Thus, applying the $\overline{\pa}$-Poincar\'{e} lemma as in \cite{JT}, it is seen that, the zeros and poles of $u$ are isolated and of integer multiplicities. More precisely, if we use $z_0\in S$
to denote a zero or pole of $u$ in $S$, then there is an integer $m\geq1$ such that
\be\lb{a4.2}
u(z)=\om(z) (z-z_0)^{\pm m},
\ee
where the plus or minus sign in front of the integer $m$ depends on whether $z_0$ is a zero or a pole and $\om(z)$ is a smooth nonvanishing function for $z=x_1+\ii x_2$ near $z_0$.

With the characterization \eq{a4.2},  assume that the sets of zeros and poles of $u$ are
\begin{equation}\label{a4.3}
    Q=\left\{q_{1},q_{2},...,q_{M}\right\},\quad P=\left\{p_{1},p_{2},...,p_{N}\right\},
\end{equation}
respectively, with repetitions counting for multiplicities.  These zeros and poles give rise to vortices and antivortices. For an illustration of this, let us consider the classical model \eq{a3.11}. Then
\eq{2.13} reads
\be
*F_A=  \frac{1-|u|^2}{1+|u|^2},
\ee
with the plus sign taken. If  $*F_A$ is considered as a vorticity field, we see that it enjoys the bounds $-1\leq *F_A\leq 1$ and that it attains its maxima 1 and minima $-1$ at the zeros and
poles of $u$, respectively. In other words, the zeros and poles of $u$ are where the vorticity field concentrates such that its associated flow velocity field $A=(A_1,A_2)$ gives rise to
flow-lines with opposite windings, hereby the names, vortices and antivortices.

Note that we can resolve \eq{2.14} to obtain the representation
\be
*F_A=\mp\frac12\Delta \ln|u|^2,\quad\mbox{away from the zeros and poles of $u$},
\ee
where
\be
\Delta v=\frac{1}{\sqrt{|g|} } \partial_{i}(g^{ij}\sqrt{|g|}\partial_{j}v),
\ee
is the Laplace--Beltrami operator over the Riemann surface $S$ with the metric $g=(g_{ij})$ ($i,j=1,2$) and $|g|=\det(g)$. Thus, with $v=\ln|u|^2$ and the data \eq{a4.3}, we see
that $v$ satisfies the sourceful equation
\begin{equation}\label{3.3}
\Delta v=4\e^v f(\e^v)-2+4\pi \sum_{s=1}^{M}\delta_{q_{s}}(x)- 4\pi\sum_{s=1}^{N}\delta_{p_{s}}(x), \quad x\in S,
\end{equation}
where $\delta_{p}(x)$ is the Dirac distribution concentrated at $p\in S$.

Conversely, if $v$ solves \eq{3.3}, it can be used to obtain a solution to the Bogomol'nyi equations \eq{2.13}
and \eq{2.14} (cf. \cite{Bra,N1,N2}) and such a construction may be made explicitly \cite{Ybook} as in the full-plane situation, that is, when $S=\bfR^2$, by the following two steps using a coordinate system.

\begin{enumerate}

\item[(i)] First, we set
\bea
u(z)&=&\mathrm{exp}\left(\frac{1}{2}v(z)+\mathrm{i}\theta(z)\right),\lb{a4.8}\\
 \theta(z)&=&\sum_{s=1}^{M}\mathrm{arg}(z-q_s)-\sum_{s=1}^{N}\mathrm{arg}(z-p_s),\lb{a4.9}
\eea

\item[(ii)] Then,  we  use the complex differentiation $\pa=\frac12(\pa_1-\ii\pa_2)$ to form
\be
A_1(z)=-\mathrm{Re}\{2\ii\overline{\pa}\ln u(z)\},\quad A_2(z)=-\mathrm{Im}\{2\ii\overline{\pa}\ln u(z)\}.\lb{a4.10}
\ee

\end{enumerate}

The formulas \eqref{a4.8}-\eqref{a4.10} allow us to represent the gauge-covariant derivatives in local coordinates as
\bea
D_1u&=&(\pa+\overline{\pa})u-\left(\frac{\overline{\pa}u}u-\frac{\pa\overline{u}}{\overline{u}}\right)u=u\pa v,\lb{x3.5}\\
D_2u&=&\ii(\pa-\overline{\pa})u+\ii\left(\frac{\overline{\pa}u}u+\frac{\pa\overline{u}}{\overline{u}}\right)u=\ii u\pa v.\lb{x3.6}
\eea
Consequently, in local isothermal coordinates, we have
\be\lb{aa4.13}
*(D_A u\wedge *\overline{D_A u})=\frac12|\nabla v|^2\e^v.
\ee
These formulas are useful and informative for us to calculate various quantities of interest in our study.

For example, we can show that the Hamiltonian energy density \eq{2.6} is regular at any pole of $u$. In fact, assume that the pole is at the origin. Then \eq{a4.2} and \eq{aa4.13} give us
\be\lb{aa4.14}
*(D_A u\wedge *\overline{D_A u})=m |z|^{-2(m+1)},\quad 0<|z|\ll1.
\ee
In view of \eq{2.12}, \eq{aa3.13}, \eq{a4.2}, and \eq{aa4.14}, we obtain
\be
F(|u|^2)*(D_A u\wedge *\overline{D_A u})=2mh'(0) |z|^{2(m-1)},\quad 0<|z|\ll1,
\ee
such that the expected regularity of the mid-term (for example) in the Hamiltonian density \eq{2.6} near the pole indeed follows for any $m\geq1$.

Moreover, it is clear that the zeros and poles of $u$ give rise to vortices and antivortices of a solution $(A,u)$ of \eqref{2.13} and\eqref{2.14}. In fact, in \eq{2.13}, the quantity
$*F_A$ represents the vorticity field. Since $w(s)$ is monotone decreasing and $w(0)=1$ and $w(\infty)=-1$, we see that the quantity $w(|u|^2)$ attains its global maximum value $1$
and minimum value $-1$ at the zeros and poles of $u$, respectively, indeed giving rise to vortices and antivortices as anticipated.

We will now obtain a vortex-antivortex solution to \eq{2.13} and \eq{2.14} through  a study of the equivalent nonlinear elliptic equation \eq{3.3}.

\section{Vortices and antivortices on a compact Riemann surface}\lb{sec5}

In this section, we establish an existence and uniqueness theorem for a multi-vortex-antivortex solution to the Bogomol'nyi equations \eqref{2.13} and\eqref{2.14}
on a compact Riemann surface
under a necessary and sufficient condition. This theorem may be stated as follows.

\begin{theorem}\label{thm1}
Consider the Bogomol'nyi equations \eqref{2.13} and \eqref{2.14} over a compact Riemann surface $S$ derived from the Hamiltonian energy density \eq{2.6} describing a generalized Abelian Higgs theory.
For any  points $q_{1},\dots,q_{M}$ and $p_{1},\dots,p_{N}$ on $S$, with repetition counting possible local
multiplicities, the equations  have a solution $(A,u)$ so that $q$'s are
 the zeros of $u$ and $p$'s are the poles of $u$, representing a prescribed distribution of vortices and antivortices, if and only if the condition
\begin{equation}\label{3.2}
|M-N|<\frac{\left|S\right|}{2 \pi}
\end{equation}
holds. Besides, if a solution exists, it is uniquely determined up to a
 gauge transformation. Furthermore, the total energy of the solution is quantized and related to the number of vortices $M$ and number of antivortices $N$ by the formula
\be\lb{axp5.2}
E=\int_S *{\cal H}(A,u)=2\pi(M+N).
\ee
Furthermore, these numbers are topological and give rise to the first Chern class and the Thom class following the expressions
\be\lb{axp5.3}
c_1=\frac1{2\pi}\int_S F_A=M-N,\quad \tau=\int_S \dd J=4\pi N,
\ee
respectively.
\end{theorem}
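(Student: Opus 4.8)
\emph{Strategy and necessity.} The plan is to reduce everything to the scalar equation \eqref{3.3} for $v=\ln|u|^2$, using the equivalence with the Bogomol'nyi system \eqref{2.13}--\eqref{2.14} recorded in Section~4. Necessity is then immediate: if $(A,u)$ is a solution, $v$ solves \eqref{3.3} in the distributional sense, and since $\int_S\Delta v\,\dd V=0$ on the closed surface $S$, integrating \eqref{3.3} and using $2-4sf(s)=2w(s)$ from \eqref{2.11} gives $\int_S w(\e^v)\,\dd V=2\pi(M-N)$. As $w$ is monotone decreasing with $w(0)=1$, $w(\infty)=-1$, and the zeros and poles form a null set, one has $-1<w(\e^v)<1$ a.e., so $-|S|<2\pi(M-N)<|S|$, which is \eqref{3.2}. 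It will turn out that \eqref{3.2} is exactly the threshold at which the variational problem for the regularized version of \eqref{3.3} becomes coercive.

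\emph{Existence by the direct method.} To produce a solution I would first strip off the Dirac sources: with $G(x,y)$ the Green function of $\Delta$ on $S$ (so $\Delta_xG=\delta_y-|S|^{-1}$, $\int_SG(\cdot,y)\,\dd V=0$) and $u_0=4\pi\sum_{s=1}^MG(\cdot,q_s)-4\pi\sum_{s=1}^NG(\cdot,p_s)$, the substitution $v=u_0+\psi$ turns \eqref{3.3} into the regular equation
\be\lb{p5red}
\Delta\psi=-2\,w\!\left(\e^{u_0+\psi}\right)+\frac{4\pi(M-N)}{|S|}\qquad\text{on }S ,
\ee
whose right-hand side is bounded and Lipschitz in $\psi$. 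Equation \eqref{p5red} is the Euler--Lagrange equation of
\be
I(\psi)=\frac12\int_S|\nabla\psi|^2\,\dd V+\int_SV(u_0+\psi)\,\dd V+\frac{4\pi(M-N)}{|S|}\int_S\psi\,\dd V ,\qquad V'(t)=-2\,w(\e^t),
\ee
on $W^{1,2}(S)$. Since $w$ is decreasing, $V''(t)=-2w'(\e^t)\e^t\ge0$, so $I$ is convex and weakly lower semicontinuous; and the conditions $w(0)=1$, $w(\infty)=-1$ together with the asymptotics \eqref{aa3.13} force the exact linear growth $V(t)\ge2|t|-C$, so, writing $\psi=\bar\psi+\psi^\perp$ (mean and mean-zero parts) and applying the Poincaré inequality to $\psi^\perp$,
\be
I(\psi)\ \ge\ \tfrac12\|\nabla\psi\|_{L^2(S)}^2+\big(2-|c|\big)\,|S|\,|\bar\psi|-C ,\qquad c:=\frac{4\pi(M-N)}{|S|}.
\ee
Under \eqref{3.2}, i.e.\ $|c|<2$, the functional $I$ is coercive, hence attains a minimizer $\psi^\ast\in W^{1,2}(S)$; elliptic regularity for \eqref{p5red} (bounded right-hand side) gives $\psi^\ast\in W^{2,p}(S)$ for all $p$, so $\psi^\ast\in C^{1,\alpha}(S)$, smooth off $Q\cup P$ by bootstrap. (A Schauder fixed-point argument for the resolvent of $\Delta-1$ would work equally well, the same a priori bound furnishing the fixed-point set.)

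\emph{Reconstruction and uniqueness.} Then $v=u_0+\psi^\ast$ solves \eqref{3.3} with the prescribed logarithmic data, and the construction of Section~4 --- formulas \eqref{a4.8}--\eqref{a4.10} in local charts, patched across $S$ by transition functions compatible with $c_1=M-N$ --- yields $(A,u)$ solving \eqref{2.13}--\eqref{2.14} with zero set $Q$ and pole set $P$. For uniqueness, if $v_1,v_2$ both solve \eqref{3.3}, then $v_1-v_2$ is regular (the singular parts cancel); multiplying the difference of the equations by $v_1-v_2$ and integrating by parts gives $\int_S|\nabla(v_1-v_2)|^2\,\dd V=-\int_S\big(4\e^{v_1}f(\e^{v_1})-4\e^{v_2}f(\e^{v_2})\big)(v_1-v_2)\,\dd V\le0$, since $t\mapsto4\e^tf(\e^t)=2-2w(\e^t)$ is nondecreasing. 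Thus $v_1-v_2\equiv a$ is constant; then $sf(s)$ takes the same value at $s=\e^{v_2(x)}$ and $s=\e^{v_2(x)+a}$ for every $x$, and since $s\mapsto sf(s)$ is nondecreasing with values $0$ at $s=0$ and $1$ at $s=\infty$ while $\e^{v_2}$ exhausts $(0,\infty)$ (because of the zeros and poles), this forces $a=0$. Hence $(A,u)$ is unique up to a $U(1)$ gauge transformation.

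\emph{Quantization, and the main obstacle.} The solution saturates the bound \eqref{a3.13}: both nonnegative terms in \eqref{a3.12} vanish, so $E=\int_S *{\cal H}(A,u)=\int_S(F_A+\dd J)=:T$. Integrating $*F_A=-\tfrac12\Delta v+2\pi\sum_s\delta_{q_s}-2\pi\sum_s\delta_{p_s}$ (equivalently, $F_A$ is the curvature of a smooth connection on the degree-$(M-N)$ line bundle carrying the meromorphic section $u$) gives $\int_SF_A=2\pi(M-N)$, i.e.\ $c_1=M-N$. For $\int_S\dd J$, delete small disks $D_\varepsilon(p_s)$ around the poles and apply Stokes: by \eqref{x3.5}--\eqref{x3.6}, \eqref{a4.2} and $\lim_{s\to\infty}sf(s)=1$ one finds $J\sim-2m_s\,\dd\theta$ near a pole $p_s$ of order $m_s$ ($\theta$ the angle there) and $J\to0$ near the zeros, so $\oint_{\partial D_\varepsilon(p_s)}J\to-4\pi m_s$ and $\int_S\dd J=4\pi\sum_sm_s=4\pi N$, i.e.\ $\tau=4\pi N$. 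Therefore $T=2\pi(M-N)+4\pi N=2\pi(M+N)\ge0$, giving $E=2\pi(M+N)$ and \eqref{axp5.2}--\eqref{axp5.3}. The step I expect to be the crux is the coercivity estimate: one must extract from the structural conditions on $f$ --- equivalently, $w(0)=1$, $w(\infty)=-1$, monotonicity, and the decay \eqref{aa3.13} --- the \emph{sharp} linear growth $V(t)\sim2|t|$, so that the coercivity threshold is precisely $|c|<2$ and not something strictly smaller; this is what makes \eqref{3.2} sufficient as well as necessary. Globalizing the local $\overline{\pa}$-reconstruction of $(A,u)$ over the compact $S$ is a secondary, standard point.
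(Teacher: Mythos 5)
Your proposal is correct, and the necessity argument, the uniqueness argument, and the computation of $c_1$, $\tau$, and $E$ all run essentially parallel to the paper's (integrate the regularized equation using $0\le sf(s)\le 1$; exploit monotonicity of $s\mapsto sf(s)$; evaluate $\oint J$ on small circles about the poles using $\lim_{s\to\infty}sf(s)=1$). The existence step, however, is a genuinely different route: you minimize the convex functional $I(\psi)=\frac12\int_S|\nabla\psi|^2+\int_SV(u_0+\psi)+\frac{4\pi(M-N)}{|S|}\int_S\psi$ over all of $W^{1,2}(S)$, with condition \eqref{3.2} entering as the exact coercivity threshold through the asymptotic slopes $\pm2$ of $V$, whereas the paper decomposes $W^{1,2}(S)=\bfR\oplus X$, absorbs the constant direction into an implicitly defined $c(\varphi)$ solving the integral constraint \eqref{3.16} (this is where \eqref{3.2}, i.e.\ $0<C_0<4$, enters in Lemma~\ref{lem1}), and then runs a Leray--Schauder argument on $X$ (Lemmas~\ref{lem2}--\ref{lem4}). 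Your method buys transparency --- convexity gives weak lower semicontinuity for free, the role of \eqref{3.2} as a coercivity condition is conceptually clean, and it parallels the direct-minimization approach the authors themselves invoke in Section~7 for the planar case; the paper's method avoids growth estimates on the primitive $V$ entirely, needing only the boundedness of $sf(s)$ for its a priori bound. One small imprecision: the exact bound $V(t)\ge 2|t|-C$ need not follow from the hypotheses (e.g.\ if $\int_0^1 f(\sigma)\,\dd\sigma=\infty$ the constant degrades), but the bound $V(t)\ge(2-\delta)|t|-C_\delta$ for every $\delta>0$ does follow from $sf(s)\to0,1$ as $s\to0,\infty$, and since \eqref{3.2} is a strict inequality this weaker bound already yields coercivity; you correctly identified this as the crux and the structural conditions do deliver it. Your uniqueness argument is in fact slightly more careful than the paper's appeal to the maximum principle, since you explicitly rule out a nonzero constant difference by letting $\e^{v}$ exhaust $(0,\infty)$.
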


At this juncture, a few remarks are in order.

\begin{enumerate}

\item[(i)] The condition \eq{3.2} imposes an explicit upper bound on the net excess of either vortices and antivortices against their counterparts.
Mathematically, the bound confines the total net topological charge defined by the first Chern class, and, physically, 
 it means the system can only sustain a ``reasonably small" excess of vortices over antivortices, or vice versa, and larger imbalances would break
down the system, as measured by the associated total magnetic flux or vorticity charge,  as expressed by the first formula in \eq{axp5.3}.

\item[(ii)] The expression \eq{axp5.2} on the other hand indicates that there is no total-energy upper bound for a system of vortices and antivortices,
although thermodynamically, or energetically, of course, the system favors fewer vortices and antivortices, as shown by the Boltzmann partition function
\be
Z=\sum_{|M-N|<\frac{|S|}{2\pi}} \e^{-\frac{2\pi}{k T}(M+N)},
\ee
where $k$ is the Boltzmann constant and $T$ the absolute temperature. As a consequence, we may calculate the total internal energy of the system given by
\be
U=\frac{2\pi}Z\sum_{|M-N|<\frac{|S|}{2\pi}} (M+N)\,\e^{-\frac{2\pi}{k T}(M+N)},
\ee
and study the underlying thermodynamical properties of the system consisting of the microstates realized by the individual subsystems of vortices and antivortices in all possible combinations.

\item[(iii)] Note that the results on the coexisting vortices and antivortices described in Theorem \ref{thm1} and the above two remarks appear in the
{\em absence} of an external magnetic field. In other words, they appear spontaneously, as witnessed by the symmetry expressed by \eq{3.2} and \eq{axp5.2}, with respect to $M$ and $N$. However, such a symmetry will be broken when an external magnetic field, say $B$ (assuming $B$ being constant for simplicity),
is switched on. In fact, in this situation, the Hamiltonian energy density \eq{2.6} is modified into the form 
\be
{\cal H}_B(A,u)={\cal H}(A,u)-*F_A\, B,
\ee
such that, using \eq{axp5.2} and \eq{axp5.3}, the total energy reads
\be\lb{axp5.7}
E_B=\int_S *{\cal H}_B(A,u)=2\pi(M[1-B]+N[1+B]).
\ee
This expression leads to the following scenarios with respect to $B$ based on the least-energy principle:

\begin{enumerate}

\item[(a)] If $B$ is weak such that $|B|<1$, \eq{axp5.7} indicates that the system favors disappearance of vortices of any kind, that is, $M=N=0$.

\item[(b)] If $B$ is strong such that $|B|>1$, \eq{axp5.7} enables us to conclude that the system favors the appearance of one of the two kinds of vortices. Specifically, if $B>1$, the system favors as many vortices as possible but as few antivortices as possible, and if $B<-1$, the opposite
phenomenon occurs, indicating the fact that the system works to stay aligned with the external field and that the external field now breaks down the
spontaneous symmetry described in (i) and (ii) above.

\item[(c)] The parts (a) and (b) above establish the critical field
\be
B_c=1,
\ee
in the normalized model situation in our context, unveiling the onset of a type-II superconductivity mechanism in which $B_c$ corresponds to the first critical magnetic
field.

\item[(d)] Since the partition function is now given by
\be
Z=\sum_{|M-N|<\frac{|S|}{2\pi}} \e^{-\frac{2\pi}{k T}(M[1-B]+N[1+B])},
\ee
we see that the pictures depicted in (a)--(c) are what would happen at low temperature, $T\approx0$, when superconductivity takes place.

\end{enumerate}

\end{enumerate}

Below we prove this theorem. First, we show that \eq{3.2} is a necessary condition. Then, we use a fixed-point theory method to prove the existence of a solution under the condition
\eq{3.2}. The uniqueness of the
solution follows simply from the monotonicity of the function $sf(s)$ and the maximum principle.

\subsection{Necessity}
Using the background functions $v_{0}^{\prime}$ and $v_{0}^{\prime \prime}$ satisfying \cite{Aubin}
\begin{equation}\label{3.4}
\Delta v_{0}^{\prime}=-\frac{4 \pi M}{|S|}+4 \pi \sum_{s=1}^{M} \delta_{q_{s}},\quad \Delta v_{0}^{\prime \prime}=-\frac{4 \pi N}{|S|}+4 \pi \sum_{s=1}^{N} \delta_{p_{s}},
\end{equation}
and the substitution $v=v_{0}^{\prime}-v_{0}^{\prime \prime}+\varphi$ to remove the singular source terms,  we see that \eqref{3.3} becomes
\begin{equation}\label{3.6}
\Delta \varphi=4\re^{v^{\prime}_{0}-v^{\prime \prime}_{0}+\varphi}f(\re^{v^{\prime}_{0}-v^{\prime \prime}_{0}+\varphi})-2+\frac{4 \pi}{|S|}(M-N).
\end{equation}
Note that the representation \eq{aa3.13} implicates that the right-hand side of \eq{3.6} is regular at the poles, $p_1,\dots, p_N$.
On the other hand, in view of \eq{aa3.11} and \eq{aa3.14}, we have
\begin{equation}\label{3.7}
0\leq sf(s)\leq 1, \quad 0<s<\infty.
\end{equation}
Thus, integrating \eq{3.6} and using \eq{3.7}, we get
\be\lb{5.7}
0<|S|-2\pi (M-N)<2|S|,
\ee
which gives us \eq{3.2}.

\subsection{Sufficiency}

Now we assume that \eqref{3.2} holds and show that equation \eqref{3.6} has a solution. We shall use a Leray--Schauder  fixed-point theory argument over the Sobolev space  $W^{1,2}(S)$.  To proceed, we define
\begin{equation}\label{3.13}
X=\left\{\varphi \in W^{1,2}(S) \bigg| \int_{S} \varphi \mathrm{~d} \sigma=0\right\},
\end{equation}
where $\dd\sigma$ is the area element of the surface $(S,g)$ so that we have the direct sum  $W^{1,2}(S)=\bfR \oplus X$  and  the Poincar\'{e} inequality
\begin{equation}\label{3.14}
 \int_{S} \varphi^{2} \mathrm{~d} \sigma \leq  C_{1} \int_{S}|\nabla \varphi|^{2} \mathrm{~d} \sigma, \quad \varphi \in X,
\end{equation}
 where in local coordinates, $|\nabla \varphi|^2= g^{j k} \partial_{j} \varphi \partial_{k} \varphi$,  and $C_1$ is a positive constant whose value is of
no concern in our analysis of the problem.

We rewrite the equation \eq{3.6} as
\begin{equation}\lb{5.11}
\Delta \varphi=4\re^{v^{\prime}_{0}-v^{\prime \prime}_{0}+c+\varphi}f(\re^{v^{\prime}_{0}-v^{\prime \prime}_{0}+c+\varphi})-C_0,\quad C_0=2-\frac{4 \pi}{|S|}(M-N),\quad \varphi\in X,
\end{equation}
where $c\in\bfR$ is a constant depending on $\vp$. Integrating \eq{5.11}, we have
\begin{equation}\label{3.16}
\int_{S} \mathrm{e}^{v^{\prime}_{0}-v^{\prime \prime}_{0}+\varphi+c}f(\mathrm{e}^{v^{\prime}_{0}-v^{\prime \prime}_{0}+\varphi+c}) \mathrm{d} \sigma=\frac14C_{0}|S|.
\end{equation}

\begin{lemma}\label{lem1}
For given  $\varphi \in X $, there is a unique number $c=c(\varphi) \in \mathbb{R}$ such that \eq{3.16} is fulfilled, provided that the condition \eq{3.2} is observed.
\end{lemma}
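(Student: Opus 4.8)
The plan is to reduce the lemma to an elementary statement for a single real function. Write $g(s):=sf(s)$; by the first identity in \eqref{2.11} this equals $\tfrac12\bigl(1-w(s)\bigr)$, so from \eqref{aa3.11}, \eqref{aa3.14} (equivalently \eqref{3.7}), $w(0)=1$, and the regularity of $w$ at $s=0$, the function $g$ is continuous and non-decreasing on $[0,\infty)$ with $g(0)=0$ and $g(\infty)=1$; moreover $g$ is strictly increasing, since $(sf(s))'=\tfrac14F(s)$ and the kinetic coefficient $F$ in \eqref{2.6} is positive on $(0,\infty)$. Put $h:=v_0'-v_0''+\varphi$, which is measurable and finite almost everywhere on $S$ (the background functions in \eqref{3.4} carry only logarithmic singularities at the points of $P\cup Q$), and define
\be
\Phi(c):=\int_S g\bigl(\mathrm{e}^{h+c}\bigr)\,\dd\sigma,\qquad c\in\bfR .
\ee
Since $0\le g\le1$ and $|S|<\infty$, this integral converges with $0\le\Phi(c)\le|S|$; as $c\mapsto g(\mathrm{e}^{h(x)+c})$ is non-decreasing for a.e.\ fixed $x$, $\Phi$ is non-decreasing, and it is continuous by dominated convergence (the integrand is bounded by $1$ and $g$ is continuous). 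Identity \eqref{3.16} reads precisely $\Phi(c)=\tfrac14C_0|S|$, so it must be shown that this equation has exactly one root.

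The next step is to pin down the range of $\Phi$. As $c\to-\infty$ we have $\mathrm{e}^{h(x)+c}\to0$ for a.e.\ $x$, hence $g(\mathrm{e}^{h(x)+c})\to g(0)=0$, and dominated convergence gives $\Phi(c)\to0$; as $c\to+\infty$ we have $\mathrm{e}^{h(x)+c}\to\infty$ a.e., hence $g(\mathrm{e}^{h(x)+c})\to1$ by \eqref{aa3.14}, and $\Phi(c)\to|S|$. On the other hand,
\be
\tfrac14C_0|S|=\tfrac14\Bigl(2-\tfrac{4\pi}{|S|}(M-N)\Bigr)|S|=\tfrac{|S|}{2}-\pi(M-N),
\ee
and $\tfrac{|S|}{2}-\pi(M-N)$ lies strictly in $(0,|S|)$ if and only if $-\tfrac{|S|}{2\pi}<M-N<\tfrac{|S|}{2\pi}$, i.e.\ if and only if \eqref{3.2} holds. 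Hence, under \eqref{3.2}, the target value is strictly between $\lim_{c\to-\infty}\Phi(c)=0$ and $\lim_{c\to+\infty}\Phi(c)=|S|$, and the intermediate value theorem applied to the continuous $\Phi$ yields a number $c=c(\varphi)$ solving \eqref{3.16}. For uniqueness, if $c_1<c_2$ both solved \eqref{3.16}, then the non-decreasing $\Phi$ would be constant on $[c_1,c_2]$, so the non-negative function $g(\mathrm{e}^{h(x)+c_2})-g(\mathrm{e}^{h(x)+c_1})$ would integrate to $0$ and thus vanish a.e.; strict monotonicity of $g$ then forces $\mathrm{e}^{h(x)+c_1}=\mathrm{e}^{h(x)+c_2}$ a.e., i.e.\ $c_1=c_2$, a contradiction.

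I do not expect a genuine obstacle; the lemma is a soft consequence of the structural monotonicity conditions \eqref{aa3.11}, \eqref{aa3.14}. The two points that need a little care are: (i) the dominated-convergence arguments, which are legitimate because, although $\mathrm{e}^{h}$ itself need not be integrable — the factor $\mathrm{e}^{-v_0''}$ blows up like $|x-p_s|^{-2}$ near each pole — the \emph{composed} integrand $g(\mathrm{e}^{h+c})$ is uniformly bounded by $1$ on the finite-measure surface $S$; and (ii) the verification that condition \eqref{3.2}, through \emph{both} of its inequalities, is exactly equivalent to $\tfrac14C_0|S|$ sitting strictly inside $(0,|S|)$ — the bound $M-N>-|S|/(2\pi)$ keeping it below $|S|$, the bound $M-N<|S|/(2\pi)$ keeping it above $0$.
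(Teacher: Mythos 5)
Your proposal is correct and follows essentially the same route as the paper: both treat the left-hand side of \eqref{3.16} as a continuous, monotone function of $c$ with limits $0$ and $|S|$ at $\mp\infty$, identify condition \eqref{3.2} with $0<C_0<4$, and conclude by the intermediate value theorem plus monotonicity. Your version is slightly more careful on two points the paper glosses over --- justifying dominated convergence via the uniform bound $0\le sf(s)\le 1$ despite the non-integrable singularities of $\mathrm{e}^{-v_0''}$, and deriving uniqueness from \emph{strict} monotonicity of $sf(s)$ rather than mere monotonicity --- but the argument is the same.
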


\begin{proof}
Consider the function
\begin{equation}
g(c)= \int_{S} \mathrm{e}^{v^{\prime}_{0}-v^{\prime \prime}_{0}+\varphi+c}f(\mathrm{e}^{v^{\prime}_{0}-v^{\prime \prime}_{0}+\varphi+c}) \mathrm{d} \sigma.
\end{equation}
By the bounded convergence theorem, we have
\begin{equation}
\lim _{c \rightarrow-\infty} g(c)=0, \quad \lim _{c \rightarrow \infty} g(c)=\left|S\right|.
\end{equation}
On the other hand, \eq{3.2} or \eq{5.7} implies $0<C_0<4$. Since the bounded convergence theorem also implies
 the continuity of  $g(c)$, we get that there is a point $c$ so that $ g(c)=\frac14C_{0}|S|$. Besides, since
\eq{aa3.11} implies $g(c)$ is monotone increasing for $c\in\bfR$, we see that for given $\vp\in X$ the solution to $g(c)=\frac14C_0|S|$ is unique. This solution may be denoted as $c(\vp)$ such that
the proof of the lemma follows.
\end{proof}
\begin{lemma}\label{lem2}
 For given $\varphi \in X$, let $c(\varphi)$ be defined as in Lemma \ref{lem1}. Then, viewed as a function, $c: X \rightarrow \mathbb{R}$  is continuous with respect to the weak topology of $X$.
\end{lemma}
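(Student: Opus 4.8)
The plan is to prove the sequential statement: if $\varphi_n\rightharpoonup\varphi$ weakly in $X$ then $c(\varphi_n)\to c(\varphi)$ in $\mathbb R$; this is exactly what the Leray--Schauder argument being set up requires. Two ingredients carry the whole proof. First, since $S$ is compact the embedding $W^{1,2}(S)\hookrightarrow L^2(S)$ is compact, so $\varphi_n\rightharpoonup\varphi$ implies $\varphi_n\to\varphi$ strongly in $L^2(S)$, and hence, after passing to a subsequence, $\varphi_n\to\varphi$ a.e.\ on $S$. Second, by \eq{3.7} the integrand occurring in the constraint \eq{3.16}, namely $\re^{\psi}f(\re^{\psi})=sf(s)$ with $s=\re^{\psi}$, obeys $0\le sf(s)\le1$ pointwise, so it is dominated by the constant $1\in L^1(S)$ and the bounded convergence theorem is always available; in particular the background difference $v^{\prime}_{0}-v^{\prime\prime}_{0}$, however singular, never creates an integrability problem.

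First I would show that $\{c_n:=c(\varphi_n)\}$ is bounded. If it were not, some subsequence (not relabelled) would satisfy $c_n\to+\infty$ or $c_n\to-\infty$, and along a further subsequence $\varphi_n\to\varphi$ a.e.\ on $S$. Since $v^{\prime}_{0}-v^{\prime\prime}_{0}+\varphi$ is finite a.e., the exponent $v^{\prime}_{0}-v^{\prime\prime}_{0}+\varphi_n+c_n$ then tends a.e.\ to $+\infty$ (resp.\ $-\infty$). Using $sf(s)\to1$ as $s\to\infty$ from \eq{aa3.14} in the first case, and $sf(s)=\tfrac12(1-w(s))\to0$ as $s\to0^{+}$ (because $w(0)=1$) in the second, the bounded convergence theorem applied to \eq{3.16} yields $|S|=\tfrac14C_0|S|$ (resp.\ $0=\tfrac14C_0|S|$), each contradicting $0<C_0<4$ (equivalently \eq{5.7}). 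Hence $\{c_n\}$ is bounded.

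Next I would identify every subsequential limit. Let $c_{n_k}\to c^{\ast}$, and pass to a further subsequence along which $\varphi_{n_k}\to\varphi$ a.e. Then the integrand in \eq{3.16} converges a.e.\ to $\re^{v^{\prime}_{0}-v^{\prime\prime}_{0}+\varphi+c^{\ast}}f(\re^{v^{\prime}_{0}-v^{\prime\prime}_{0}+\varphi+c^{\ast}})$, so by bounded convergence
\begin{equation}
\int_{S}\re^{v^{\prime}_{0}-v^{\prime\prime}_{0}+\varphi+c^{\ast}}f\!\left(\re^{v^{\prime}_{0}-v^{\prime\prime}_{0}+\varphi+c^{\ast}}\right)\dd\sigma=\tfrac14C_0|S|.
\end{equation}
By the uniqueness part of Lemma \ref{lem1}, $c^{\ast}=c(\varphi)$. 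Thus the bounded sequence $\{c_n\}$ has $c(\varphi)$ as its only subsequential limit, whence $c_n\to c(\varphi)$; since $\varphi_n\rightharpoonup\varphi$ was arbitrary, $c$ is weakly continuous on $X$, as claimed.

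I do not expect a genuine obstacle here: the monotonicity and endpoint asymptotics of $sf(s)$ recorded in \eq{aa3.11}, \eq{aa3.14} and \eq{3.7}, together with the uniqueness in Lemma \ref{lem1}, reduce everything to the bounded convergence theorem. The only points deserving care are the orderly bookkeeping of the nested subsequences that combine a.e.\ convergence of $\varphi_n$ with convergence of $c_n$, and the observation that the uniform bound $sf(s)\le1$ is precisely what licenses passing to the limit inside \eq{3.16} irrespective of the singularities of $v^{\prime}_{0}-v^{\prime\prime}_{0}$.
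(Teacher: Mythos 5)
Your proof is correct and follows the same overall architecture as the paper's: first show that $\{c(\varphi_n)\}$ is bounded above and below by contradiction, using the endpoint behaviour of $sf(s)$ at $s=0$ and $s=\infty$ together with the strict inequalities $0<C_0<4$ from \eqref{5.7}, and then identify every subsequential limit as $c(\varphi_0)$ via the uniqueness statement of Lemma \ref{lem1}, concluding with the standard sub-subsequence argument. The only difference is in the technical tool used to pass to the limit inside the constraint \eqref{3.16}: you extract a further subsequence converging almost everywhere (available from the strong $L^p$ convergence supplied by the compact embedding) and invoke the bounded convergence theorem with the uniform dominant $0\le sf(s)\le 1$ of \eqref{3.7}, whereas the paper applies the Egorov theorem to obtain uniform convergence of $\varphi_n$ off a set of measure $\varepsilon$, uses the monotonicity \eqref{aa3.11} of $sf(s)$ to bound the integrand there, and lets $\varepsilon\to0$ at the end. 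Your route is slightly more economical in that it avoids Egorov and the $\varepsilon$-bookkeeping entirely, at the cost of some care with nested subsequences, which you handle correctly; both arguments ultimately rest on exactly the same structural facts, namely \eqref{3.7}, \eqref{aa3.14}, $w(0)=1$, and the uniqueness from Lemma \ref{lem1}.
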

\begin{proof}
Assume that $\left\{\varphi_{n}\right\}$ is a sequence in $X$ such that
$
\varphi_{n} \to  \varphi_{0} \in X   \text{ weakly as }  n \rightarrow \infty.
$
Using the compact embedding
$
 W^{1,2}(S) \rightarrow L^{p}(S)  \text{ for }~ p \geq 1,
$
 we obtain
 $
 \varphi_{n} \rightarrow \varphi_{0}$  strongly in  $L^{p}(S)$.
We need to show that  $c\left(\varphi_{n}\right) \rightarrow c\left(\varphi_{0}\right)$  as  $n \rightarrow \infty$.
This can be accomplished through a few steps as follows.

Step 1. The sequence  $\left\{c\left(\varphi_{n}\right)\right\} $ is bounded from above.

Otherwise, extracting a subsequence if necessary, we may assume that  $c\left(\varphi_{n}\right) \rightarrow \infty$ as $n \rightarrow \infty$. By the strong convergence  $\varphi_{n} \rightarrow \varphi_{0}$  in  $L^{p}(S)$ and the Egorov theorem, we obtain that for any $ \varepsilon>0 $ there is a sufficiently large number  $K_{\varepsilon}>0$  and a subset $S_{\varepsilon} \subset S$  such that
\begin{equation}\label{3.18}
\left|\varphi_{n}(x)\right| \leq K_{\varepsilon}, \quad x \in S-S_{\varepsilon}; \quad\left|S_{\varepsilon}\right|<\varepsilon .
\end{equation}
As a result, by replacing $\varphi$ with $\varphi_{n}$ in equation \eqref{3.16} and applying \eqref{3.18}, we arrive at
\ber\label{3.19}
\frac14 C_{0}\left|S\right|&=&\int_{S} \mathrm{e}^{v^{\prime}_{0}-v^{\prime \prime}_{0}+\varphi_n+c(\varphi_n)}f(\mathrm{e}^{v^{\prime}_{0}-v^{\prime \prime}_{0}+\varphi_n+c(\varphi_n)})  \dd \sigma \nn\\
&\geq & \int_{S-S_\varepsilon}\mathrm{e}^{v^{\prime}_{0}-v^{\prime \prime}_{0}+\varphi_n+c(\varphi_n)}f(\mathrm{e}^{v^{\prime}_{0}-v^{\prime \prime}_{0}+\varphi_n+c(\varphi_n)})  \dd \sigma\nn \\
&\geq& \int_{S-S_\varepsilon}\mathrm{e}^{v^{\prime}_{0}-v^{\prime \prime}_{0}+c(\varphi_n)-K_\varepsilon}f(\mathrm{e}^{v^{\prime}_{0}-v^{\prime \prime}_{0}+c(\varphi_n)-K_\varepsilon })\dd \sigma.
\eer
Here, we have used the property $0\leq s f(s)\leq1$ and the monotonicity of $sf(s)$.
Taking $n \rightarrow \infty$ on the right-hand side of the above inequality, we arrive at
$
C_{0}\left|S\right|\geq 4(\left|S\right|-\varepsilon).
$
Since $\varepsilon>0$ is arbitrary, we get $C_0\geq 4$, contradicting the condition $C_{0}<4$.

Step 2. The sequence $\left\{c\left(\varphi_{n}\right)\right\} $ is bounded from below.

The proof is similar to that in Step 1. In fact, assume otherwise that $c(\vp_n)\to-\infty$ as $n\to\infty$. Using \eq{3.18} and replacing $\vp$ in \eq{3.16} by $\vp_n$, we get
\bea
\frac14C_0|S|&=&\left(\int_{S-S_\vep}+\int_{S_\vep}\right)\mathrm{e}^{v^{\prime}_{0}-v^{\prime \prime}_{0}+\varphi_n+c(\varphi_n)}f(\mathrm{e}^{v^{\prime}_{0}-v^{\prime \prime}_{0}+\varphi_n+c(\varphi_n)})  \dd \sigma \nn\\
&\leq&\int_{S-S_\vep}\mathrm{e}^{v^{\prime}_{0}-v^{\prime \prime}_{0}+K_\vep+c(\varphi_n)}f(\mathrm{e}^{v^{\prime}_{0}-v^{\prime \prime}_{0}+K_\vep+c(\varphi_n)})  \dd \sigma +|S_\vep|.\lb{5.20}
\eea
Taking $n\to\infty$ in \eq{5.20}, we get $C_0|S|\leq 4\vep$. Since $\vep>0$ is arbitrary, we arrive at $C_0\leq0$,  contradicting $C_0>0$.

Step 3. $c(\vp_n)\to c(\vp_0)$ as $n\to\infty$.

In fact, since $\left\{c\left(\varphi_{n}\right)\right\}$ is bounded, we may extract a subsequence if necessary such that we may assume $c\left(\varphi_{n}\right) \rightarrow$ some $c_0 \in \mathbb{R}$ as $n \rightarrow \infty$. It suffices to show $c_0=c(\vp_0)$.

For any $\vep>0$, let $S_\vep\subset S$ be such that
$\vp_n\to\vp_0$ uniformly on $S-S_\vep$ and $|S_\vep|<\vep$. Then we have
\bea
&&\left(\int_{S_\vep}+\int_{S-S_\vep}\right) \mathrm{e}^{v^{\prime}_{0}-v^{\prime \prime}_{0}+\varphi_0+c_0}f(\mathrm{e}^{v^{\prime}_{0}-v^{\prime \prime}_{0}+\varphi_0+c_0}) \mathrm{d} \sigma\nn\\
&&\leq \vep+\lim_{n\to\infty}\int_{S-S_\vep}\mathrm{e}^{v^{\prime}_{0}-v^{\prime \prime}_{0}+\varphi_n+c(\vp_n)}f(\mathrm{e}^{v^{\prime}_{0}-v^{\prime \prime}_{0}+\varphi_n+c(\vp_n)}) \mathrm{d} \sigma\nn\\
&&\leq \vep+\frac14C_0|S|.
\eea
Since $\vep>0$ is arbitrary, we obtain
\be\lb{5.22}
\int_{S} \mathrm{e}^{v^{\prime}_{0}-v^{\prime \prime}_{0}+\varphi_0+c_0}f(\mathrm{e}^{v^{\prime}_{0}-v^{\prime \prime}_{0}+\varphi_0+c_0}) \mathrm{d} \sigma\leq \frac14C_0|S|.
\ee
Similarly, we have
\bea
&&\int_{S} \mathrm{e}^{v^{\prime}_{0}-v^{\prime \prime}_{0}+\varphi_0+c_0}f(\mathrm{e}^{v^{\prime}_{0}-v^{\prime \prime}_{0}+\varphi_0+c_0}) \mathrm{d} \sigma\nn\\
&&\geq \lim_{n\to\infty}\int_{S-S_\vep}\mathrm{e}^{v^{\prime}_{0}-v^{\prime \prime}_{0}+\varphi_n+c(\vp_n)}f(\mathrm{e}^{v^{\prime}_{0}-v^{\prime \prime}_{0}+\varphi_n+c(\vp_n)}) \mathrm{d} \sigma\nn\\
&&\geq \lim_{n\to\infty}\left(\int_{S_\vep}+\int_{S-S_\vep}\right)\mathrm{e}^{v^{\prime}_{0}-v^{\prime \prime}_{0}+\varphi_n+c(\vp_n)}f(\mathrm{e}^{v^{\prime}_{0}-v^{\prime \prime}_{0}+\varphi_n+c(\vp_n)}) \mathrm{d} \sigma-\vep\nn\\
&&=\frac14C_0|S|-\vep.
\eea
Since $\vep>0$ is arbitrary, we have reversed the inequality \eq{5.22}. Therefore equality in \eq{5.22} holds.

By Lemma \ref{lem1}, we derive $c_0=c(\varphi_0)$. Thus $c(\varphi_n)\to c(\varphi_0)$ as $n\to\infty$  as claimed.
\end{proof}

 We are now ready to use the fixed-point method to obtain a solution of the equation \eqref{3.6}.
 For this purpose,  we pick $\varphi \in X$ and consider the equation
 \begin{equation}\label{3.24}
     \Delta \psi=4\re^{v^{\prime}_{0}-v^{\prime \prime}_{0}+\varphi+c(\varphi)}f(\re^{v^{\prime}_{0}-v^{\prime \prime}_{0}+\varphi+c(\varphi)})-C_{0}.
 \end{equation}
By \eqref{3.16}, the right-hand side of \eqref{3.24} has zero integral over $S$. Thus, 
the equation \eqref{3.24} has a unique solution $\psi$ in $X$ such that the relation $\varphi \to \psi$ defines a map $T:X\to X$
with $T(\vp)=\psi$.

\begin{lemma}\label{lem3a}
The map $T: X \to X$ is completely continuous.
\end{lemma}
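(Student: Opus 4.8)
The plan is to verify the three ingredients that make the Leray--Schauder machinery applicable: that $T$ maps bounded sets to precompact sets, that $T$ is sequentially continuous, and that both facts combine to give complete continuity. First I would observe that for $\varphi\in X$, the right-hand side of \eqref{3.24}, call it $N(\varphi)$, lies in $L^2(S)$ with zero mean: indeed $0\le sf(s)\le 1$ by \eqref{3.7}, so $0\le 4\e^{v'_0-v''_0+\varphi+c(\varphi)}f(\e^{v'_0-v''_0+\varphi+c(\varphi)})\le 4$ pointwise, hence $\|N(\varphi)\|_{L^2}\le (4+|C_0|)|S|^{1/2}$ is bounded \emph{uniformly} in $\varphi$ (one only needs $c(\varphi)$ finite, not bounded, for membership, but the pointwise bound is what gives the uniform estimate). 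Elliptic $L^2$ estimates for $\Delta$ on the compact surface $S$, restricted to the mean-zero subspace where $\Delta$ is invertible, then give $\|\psi\|_{W^{2,2}(S)}=\|T(\varphi)\|_{W^{2,2}(S)}\le C\|N(\varphi)\|_{L^2}\le C'$, a bound independent of $\varphi$. Since $W^{2,2}(S)\hookrightarrow W^{1,2}(S)$ compactly on a compact surface, $T$ maps all of $X$ (in particular every bounded set) into a precompact subset of $X$.

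Next I would establish continuity of $T$ on $X$ with its norm topology. Let $\varphi_n\to\varphi_0$ strongly in $W^{1,2}(S)$. Then $\varphi_n\to\varphi_0$ in every $L^p(S)$ and, in particular, $\varphi_n\to\varphi_0$ weakly, so Lemma~\ref{lem2} gives $c(\varphi_n)\to c(\varphi_0)$. Passing to a subsequence along which $\varphi_n\to\varphi_0$ a.e., the integrand $4\e^{v'_0-v''_0+\varphi_n+c(\varphi_n)}f(\e^{v'_0-v''_0+\varphi_n+c(\varphi_n)})$ converges a.e.\ to its limit and is dominated by the constant $4$; the dominated convergence theorem gives $N(\varphi_n)\to N(\varphi_0)$ in $L^2(S)$. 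Applying the bounded inverse of $\Delta$ on the mean-zero subspace, $T(\varphi_n)=\Delta^{-1}N(\varphi_n)\to \Delta^{-1}N(\varphi_0)=T(\varphi_0)$ in $W^{2,2}(S)$, hence in $W^{1,2}(S)$. Since every subsequence has a further subsequence converging to the same limit $T(\varphi_0)$, the full sequence converges, so $T$ is continuous on $X$.

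Combining the two steps finishes the proof: $T$ is continuous and maps bounded sets into relatively compact sets, which is precisely the definition of a completely continuous map. I would also note for later use that the uniform $W^{2,2}$ bound shows $T(X)$ itself is bounded (indeed precompact), which will be convenient when invoking the Leray--Schauder fixed-point theorem to produce a solution of \eqref{3.6}.

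The main obstacle I anticipate is not the elliptic regularity or the precompactness, which are routine on a compact surface, but controlling the implicitly defined constant $c(\varphi)$ under \emph{strong} $L^p$ convergence while only Lemma~\ref{lem2} is stated for \emph{weak} convergence of $\varphi_n$. This is harmless here, since strong convergence in $W^{1,2}$ implies weak convergence, so Lemma~\ref{lem2} applies directly; the subtlety to be careful about is simply that the a.e.\ convergence of $\varphi_n$ used in the dominated-convergence step requires passing to a subsequence, which then forces the standard subsequence-of-every-subsequence argument to recover convergence of the full sequence $T(\varphi_n)$.
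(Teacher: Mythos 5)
Your proof is correct, but it takes a genuinely different route from the paper's. The paper proves in one stroke the slightly stronger statement that $T$ maps weakly convergent sequences to norm-convergent ones: it subtracts the equations for $\psi_n=T(\varphi_n)$ and $\psi_0=T(\varphi_0)$, tests the difference equation against $\psi_n-\psi_0$, and uses the Poincar\'e inequality \eqref{3.14} to bound $\|\psi_n-\psi_0\|_{W^{1,2}}$ by the $L^2$ norm of the difference of the nonlinearities, which is then driven to zero by Egorov's theorem together with Lemma \ref{lem2}; no elliptic regularity beyond the energy level is used. You instead split the claim into compactness plus continuity: the pointwise bound $0\le sf(s)\le1$ of \eqref{3.7} gives a uniform $L^2$ bound on the right-hand side of \eqref{3.24}, the $L^2$ elliptic estimate on the mean-zero subspace gives a uniform $W^{2,2}$ bound on $T(X)$, and Rellich gives precompactness; continuity then follows from dominated convergence (in place of Egorov) and the boundedness of $\Delta^{-1}$. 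Your route buys more --- a uniform $W^{2,2}$ bound on the whole image $T(X)$, which would also streamline the a priori estimate of Lemma \ref{lem4} --- at the cost of invoking second-order elliptic regularity on the surface. One caveat worth flagging: ``continuous and maps bounded sets to precompact sets'' and ``maps weakly convergent sequences to norm-convergent sequences'' are not equivalent for nonlinear maps (only the latter implies the former on a reflexive space), and the paper's proof establishes the latter, which is also what its Leray--Schauder setup over the weak topology of $X$ implicitly exploits. Fortunately your argument yields the stronger property with no extra work, since your continuity step uses only the $L^p(S)$ convergence of $\varphi_n$ and Lemma \ref{lem2}, both of which are available under mere weak convergence in $X$; you should simply state the hypothesis of that step as weak convergence so the lemma is proved in the form the paper uses it.
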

\begin{proof}
Let $\left\{\varphi_{n}\right\}$ be a sequence in $X$ that satisfies
$
\varphi_{n} \to \varphi_{0}$ weakly in $X$  as $n \to \infty.
$
So we have
$
\varphi_{n} \to \varphi_{0}$ strongly in $L^{p}(S)$ ($p \geq 1$).

Let $\psi_{n}=T(\varphi_{n})$ and $\psi_{0}=T(\varphi_{0})$, then
\be\label{3.25}
\Delta (\psi_{n}-\psi_{0})=h(v^{\prime}_{0}-v^{\prime \prime}_{0}+\varphi_{n}+c(\varphi_{n}))-h(v^{\prime}_{0}-v^{\prime \prime}_{0}+\varphi_{0}+c(\varphi_{0})),
\ee
where $h(v)=4\re^v f(\re^v)$. We are to show that $\psi_n\to\psi_0$ in $X$.
Multiplying  \eqref{3.25} by $\psi_n-\psi_0$ and integrating by parts, we get
\bea\lb{5.23}
&&\int_{S}\left|\nabla(\psi_{n}-\psi_{0})\right|^{2}\mathrm{d}\sigma\nn\\
&&\leq\|h(v^{\prime}_{0}-v^{\prime \prime}_{0}+\varphi_{n}+c(\varphi_{n}))-h(v^{\prime}_{0}-v^{\prime \prime}_{0}+\varphi_{0}+c(\varphi_{0}))\|_{L^2}\|\psi_n-\psi_0\|_{L^2}\nn\\
&&\leq \vep\|\psi_n-\psi_0\|^2_{L^2}+\frac1{4\vep} \|h(v^{\prime}_{0}-v^{\prime \prime}_{0}+\varphi_{n}+c(\varphi_{n}))-h(v^{\prime}_{0}-v^{\prime \prime}_{0}+\varphi_{0}+c(\varphi_{0}))\|_{L^2}^2,\quad\quad
\eea
where $\vep>0$ is arbitrary.  Using \eq{3.14} in \eq{5.23}, we have
\be\lb{5.24}
\|\psi_n-\psi_0\|_{W^{1,2}}\leq C\|h(v^{\prime}_{0}-v^{\prime \prime}_{0}+\varphi_{n}+c(\varphi_{n}))-h(v^{\prime}_{0}-v^{\prime \prime}_{0}+\varphi_{0}+c(\varphi_{0}))\|_{L^2}
\ee
for some constant $C>0$. Since $h(v)$ is bounded, $\vp_n\to\vp_0$ strongly in $L^p(S)$ ($p\geq1$), and $c(\vp_n)\to c(\vp_0)$, as $n\to\infty$, we can use the Egorov theorem to see that the right-hand side of
\eq{5.24} tends to zero as $n\to\infty$. Hence $\psi_n\to\psi_0$ in $W^{1,2}(S)$ as claimed.
\end{proof}
We now establish {\em a priori} estimate for the fixed points.
\begin{equation}\label{3.26}
    \varphi_{t}=tT(\varphi_{t}),\quad 0\le t \le 1.
\end{equation}
\begin{lemma}\label{lem4}
 There is a number $C>0$ independent of $t \in\left [ 0,1 \right ] $ so that
\begin{equation}\label{3.27}
 \left \| \varphi_{t} \right \|_{X} \le C,\quad 0 \le t \le 1.
\end{equation}
As a result, $T$ has a fixed point in $X$.
\end{lemma}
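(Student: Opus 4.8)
The plan is to obtain the uniform bound \eqref{3.27} by a single energy estimate and then close with the Leray--Schauder fixed-point theorem. What makes this short is that the structural conditions \eqref{aa3.11} and \eqref{aa3.14} were imposed precisely so that $0\le sf(s)\le 1$ for all $s\ge 0$ (this is \eqref{3.7}); consequently the nonlinearity $h(v)=4\re^{v}f(\re^{v})$ occurring in \eqref{3.24} satisfies $0\le h\le 4$ pointwise, \emph{regardless} of the value of its argument. That pointwise bound, rather than any fine analysis, is the whole engine of the estimate.

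For $t=0$ one has $\varphi_0=0$, so \eqref{3.27} is trivial; fix $t\in(0,1]$. Writing $\psi_t=T(\varphi_t)$ and combining \eqref{3.24} with \eqref{3.26} gives, in the weak $W^{1,2}(S)$ sense, $\Delta\varphi_t=t\,\Delta\psi_t=t\big(h(V_t)-C_0\big)$, where $V_t:=v^{\prime}_{0}-v^{\prime\prime}_{0}+\varphi_t+c(\varphi_t)$. Since $h(V_t)\in L^{\infty}(S)\subset L^{2}(S)$, elliptic regularity puts $\varphi_t$ in $W^{2,2}(S)$, so testing this equation against $\varphi_t$ is legitimate. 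Because $\int_S\varphi_t\,\dd\sigma=0$, the constant $C_0$ drops, and with $0\le t\le 1$ and $0\le h\le 4$ one gets $\int_S|\nabla\varphi_t|^{2}\,\dd\sigma=-t\int_S h(V_t)\varphi_t\,\dd\sigma\le 4\int_S|\varphi_t|\,\dd\sigma$. Applying the Schwarz inequality and the Poincar\'e inequality \eqref{3.14}, $\|\varphi_t\|_{L^{2}(S)}\le C_1^{1/2}\|\nabla\varphi_t\|_{L^{2}(S)}$, this yields
\begin{equation}
\|\nabla\varphi_t\|_{L^{2}(S)}^{2}\le 4|S|^{1/2}C_1^{1/2}\,\|\nabla\varphi_t\|_{L^{2}(S)},
\end{equation}
hence $\|\nabla\varphi_t\|_{L^{2}(S)}\le 4|S|^{1/2}C_1^{1/2}$ and, by \eqref{3.14} once more, the claimed bound \eqref{3.27} with a constant $C$ independent of $t$. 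Note that no control on $c(\varphi_t)$ is needed here, precisely because $h$ is bounded no matter how large $|c(\varphi_t)|$ is.

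With \eqref{3.27} established, and with $T:X\to X$ completely continuous by Lemma \ref{lem3a}, the Leray--Schauder fixed-point theorem furnishes $\varphi_*\in X$ with $T(\varphi_*)=\varphi_*$. Then $v=v^{\prime}_{0}-v^{\prime\prime}_{0}+\varphi_*+c(\varphi_*)$ solves \eqref{3.3} with zero set $Q$ and pole set $P$ (using \eqref{3.4} and \eqref{3.16}), and the reconstruction \eqref{a4.8}--\eqref{a4.10} turns $v$ into a solution $(A,u)$ of \eqref{2.13}--\eqref{2.14} carrying the prescribed vortices and antivortices; this completes the sufficiency half of Theorem \ref{thm1}.

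I do not anticipate a genuine obstacle. The only points requiring any care are (i) justifying the integration by parts, which the elliptic-regularity remark above handles since the right-hand side of the $\varphi_t$-equation is bounded; and (ii) tracking where the hypothesis \eqref{3.2} is used — it enters only indirectly, through Lemma \ref{lem1}, to guarantee that $c(\varphi_t)$ is well defined (so that $T$ is defined at all), and plays no part in the estimate \eqref{3.27} itself. Had $sf(s)$ failed to be bounded, this step would instead call for a truncation together with a Moser--Trudinger-type inequality; here that is avoided entirely by the built-in bound \eqref{3.7}.
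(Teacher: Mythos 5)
Your proposal is correct and follows essentially the same route as the paper: multiply the equation for $\varphi_t$ by $\varphi_t$, integrate by parts (the $C_0 t$ term dropping since $\int_S\varphi_t\,\dd\sigma=0$), use the uniform bound $0\le sf(s)\le 1$ from \eqref{3.7} together with the Schwarz and Poincar\'e inequalities to bound $\|\nabla\varphi_t\|_{L^2}$ independently of $t$, and then invoke the Leray--Schauder theorem. The only cosmetic difference is that you cancel a factor of $\|\nabla\varphi_t\|_{L^2}$ directly where the paper absorbs via a Young inequality with small $\vep$; your added remarks on elliptic regularity and the $t=0$ case are harmless refinements of the same argument.
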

\begin{proof}
According to the definition \eqref{3.26},  we see that for $t>0$, $\varphi_{t}$ satisfies the equation
\begin{equation}\label{3.28}
    \Delta \varphi_{t}=
4t\re^{v^{\prime}_{0}-v^{\prime \prime}_{0}+c(\varphi_{t})+\varphi_{t}}f(\mathrm{e}^{v^{\prime}_{0}-v^{\prime \prime}_{0}+c(\varphi_{t})+\varphi_{t}})-C_{0}t.
\end{equation}
Multiplying both sides of \eqref{3.28} by  $\varphi_{t}$  and integrating by parts, we arrive at
\bea\lb{axp5.27}
\int_{S}\left|\nabla \varphi_{t}\right|^{2} \dd\sigma
&\leq&  \int_{S}\left|4t\re^{v^{\prime}_{0}-v^{\prime \prime}_{0}+c(\varphi_{t})+\varphi_{t}}f(\mathrm{e}^{v^{\prime}_{0}-v^{\prime \prime}_{0}+c(\varphi_{t})+\varphi_{t}})\varphi_t\right| \dd\sigma\nn\\
&\leq& 4|S|^{\frac12}\|\vp_t\|_{L^2}\leq\vep\|\varphi\|^2_{L^2}+\frac4\vep|S|,
\eea
where $\vep>0$ is an arbitrary constant.
Using \eqref{3.14} on the right-hand side of \eq{axp5.27} and choosing $\vep$ to be suitably small,  we obtain the boundedness of the left-hand
side of \eq{axp5.27} and we thus establish the lemma.
\end{proof}
\begin{lemma}
The solution to \eq{3.6} is unique.
\end{lemma}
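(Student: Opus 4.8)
The plan is to compare two solutions of \eq{3.6} directly and reduce the matter to the sign-definiteness of a single quantity. Suppose $\varphi_1$ and $\varphi_2$ both solve \eq{3.6}; by the regularity of the right-hand side of \eq{3.6} at the poles noted above, together with the bound $0\le sf(s)\le1$ and standard elliptic estimates, each $\varphi_i$ is of class $C^2$ on all of $S$. Set $w=\varphi_1-\varphi_2$, write $g(s)=sf(s)$, and put $s_i=\re^{v_0'-v_0''+\varphi_i}$ for $i=1,2$. Subtracting the two copies of \eq{3.6} gives
\be
\Delta w=4\bigl(g(s_1)-g(s_2)\bigr)\quad\text{on }S.
\ee
Since $t\mapsto\re^{v_0'-v_0''+t}$ is strictly increasing, $s_1(x)>s_2(x)$ precisely where $w(x)>0$, and since $g$ is monotone nondecreasing by \eq{aa3.11}, the pointwise product $w\,\bigl(g(s_1)-g(s_2)\bigr)$ is everywhere $\ge0$.

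The next step is to test this identity against $w$ itself. Multiplying by $w$ and integrating by parts over the closed surface $S$ yields
\be
\int_S|\nabla w|^2\,\dd\sigma=-4\int_S w\,\bigl(g(s_1)-g(s_2)\bigr)\,\dd\sigma\le0,
\ee
so $\nabla w\equiv0$; as $S$ is connected, $w\equiv c$ for some constant $c$, and moreover the right-hand integral must vanish, so $g(s_1)=g(s_2)$ throughout $S$. (Equivalently, one may write $\Delta w=a(x)\,w$ with $a(x)\ge0$ and invoke the maximum principle on the closed surface; I would use the integral version for cleanliness.)

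It then remains to rule out $c\ne0$. Substituting $\varphi_1=\varphi_2+c$ back into \eq{3.6} forces $g(\re^c s_2)=g(s_2)$ at every point of $S$. Because $g(s)=sf(s)$ is nondecreasing with $g(0^+)=0$ and $\lim_{s\to\infty}g(s)=1$ by \eq{aa3.14} — and, in the model cases such as \eq{a3.11} where $g(s)=s/(1+s)$, strictly increasing — this identity forces $\re^c=1$, i.e. $c=0$ and $\varphi_1\equiv\varphi_2$. I expect this last step to be the only genuine subtlety: if $sf(s)$ is permitted to have a flat segment, one must additionally observe that $s_2(x)=\re^{v_0'-v_0''+\varphi_2(x)}$ takes values arbitrarily close to $0$ near the zeros $q_s$ and arbitrarily large near the poles $p_s$, so the translated intervals $[\,s_2(x),\re^c s_2(x)\,]$ sweep across a range on which $g$ must be strictly increasing (since $g(0^+)=0\ne1=g(\infty)$), again giving $c=0$. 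With $\varphi$, hence $v=v_0'-v_0''+\varphi$ and, via the reconstruction \eq{a4.8}–\eq{a4.10}, the pair $(A,u)$ up to a gauge transformation, uniquely determined, the uniqueness assertion follows.
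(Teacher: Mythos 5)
Your proof is correct and rests on the same mechanism as the paper's --- the monotonicity \eq{aa3.11} of $g(s)=sf(s)$ --- but implements it differently and, on one point, more carefully. The paper subtracts the two equations, applies the mean value theorem to write $\Delta(\vp-\psi)=h'(v_0'-v_0''+\tilde{\vp})(\vp-\psi)$ with $h(v)=4\re^v f(\re^v)$ and $h'\geq0$, and invokes the maximum principle in three lines. You instead test the difference $w$ against itself and integrate over the closed surface, which is the integral counterpart of the same idea. The added value of your version is the final step: on a compact surface without boundary, either method at first only yields that $w$ is \emph{constant} (the operator $\Delta-a(x)$ with $a\geq0$ but possibly $a\equiv0$ on the relevant set annihilates constants), so one must separately exclude $w\equiv c\neq0$. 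The paper's one-line appeal to the maximum principle tacitly uses strict positivity of $h'$ somewhere along the segment joining the two solutions, i.e.\ strict monotonicity of $sf(s)$ there; since the standing hypotheses only give $(sf(s))'\geq0$, this is a genuine (if small) gap, and your observation that $s_2=\re^{v_0'-v_0''+\vp_2}$ sweeps all of $(0,\infty)$ between the zeros and the poles --- so that $g(\re^c s)=g(s)$ for all such $s$ would force $g$ to be constant on $(0,\infty)$, against $g(0^+)=0\neq1=g(\infty)$ --- closes it cleanly whenever $M\geq1$ and $N\geq1$. (When only one species is present the same conclusion follows from the integral identity $\int_S g(s_2)\,\dd\sigma=\tfrac14 C_0|S|$ with $0<C_0<4$, since a flat $g$ on the swept range would force $C_0=0$ or $C_0=4$.) In short, your argument is sound and slightly more complete than the one printed in the paper.
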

\begin{proof}
In fact, let $\vp$ and $\psi$ be two solutions to \eq{3.6}. Then we have
\bea
\Delta(\vp-\psi)&=&4\re^{v^{\prime}_{0}-v^{\prime \prime}_{0}+\varphi}f(\re^{v^{\prime}_{0}-v^{\prime \prime}_{0}+\varphi})-4\re^{v^{\prime}_{0}-v^{\prime \prime}_{0}+\psi}f(\re^{v^{\prime}_{0}-v^{\prime \prime}_{0}+\psi})\nn\\
&=& h'(v^{\prime}_{0}-v^{\prime \prime}_{0}+\tilde{\varphi})(\vp-\psi),
\eea
where $\tilde{\vp}$ is between $\vp$ and $\psi$ and $h(v)=4\e^v f(\e^v)$. Since $h'(v)\geq0$, we see that the maximum principle implies that $\vp=\psi$ on $S$.
\end{proof}

\subsection{Topological numbers}

For simplicity and without loss of generality, we consider the plus sign situation only (that is, we choose to consider the upper sign, $+$, situation
in \eq{2.13} and \eq{2.14}).

First, in view of \eq{2.13}, we see that the first Chern class is
\bea\lb{5.30}
c_1&=&\frac1{2\pi}\int_S F_A=\frac1{2\pi}\int_S *F_A\,\dd\sigma\nn\\
&=&\frac1{2\pi}\int_S w(|u|^2)\,\dd\sigma\nn\\
&=&\frac1{2\pi}\int\left(1-2|u|^2 f(|u|^2)\right)\,\dd\sigma\nn\\
&=&\frac1{2\pi}\int\left(1-2\re^{v^{\prime}_{0}-v^{\prime \prime}_{0}+\varphi}f(\re^{v^{\prime}_{0}-v^{\prime \prime}_{0}+\varphi})\right)\,\dd\sigma\nn\\
&=&(M-N),
\eea
in view of \eq{2.11} and  \eq{3.6}.

Next, we calculate the quantity $\int_S \dd J$, where $\dd J$ is given in \eq{2.10}, which is known to give rise to the Thom class \cite{SSY}
for the special case \eq{a3.11}.  Since it can be examined that only the poles of $u$ contribute to
this quantity, we can confine our study on a local region of a pole $p$ of $u$.  Thus, with local isothermal coordinates around the poles $p_1,\dots, p_N$ of $u$, with repetitions allowed to count for multiplicities, we have 
\bea\lb{5.31}
\tau&=&\int_S \dd J=\int_S (*\dd J)\,\dd\sigma\nn\\
&=&\sum_{i=1}^N \lim_{r\to 0}\int_{|x-p_i|\leq r}J_{12}\dd x\nn\\
&=&
\sum_{i=1}^N \lim_{r\to 0}\oint_{|x-p_i|=r}(J_1 \dd x_1+J_2 \dd x_2)\equiv\sum_{i=1}^N \tau(p_i).
\eea
On the other hand,
inserting \eq{x3.5} and \eq{x3.6} into \eq{2.9}, we have
\bea\lb{x3.7}
\oint_{|x-p|=r}(J_1 \dd x_1+J_2 \dd x_2)&=&\ri\oint_{|x-p|=r}|u|^2 f(|u|^2)([\overline{\pa}-\pa] v\dd x_1-\ii[\overline{\pa}+\pa]v\dd x_2)\nn\\
&=&\oint_{|x-p|=r}|u|^2 f(|u|^2)(-\pa_2 v\dd x_1+\pa_1 v\dd x_2).
\eea
Let $p$ be a pole of $u$. Then, in view of the local representation \eq{a4.2}, we have
\be\lb{x3.8}
|u(x)|^2=|x-p|^{-2m}\gamma(x)=r^{-2m}\gamma(x),\quad r=|x-p|\ll1,
\ee
where $\gamma(x)$ is a smooth nonvanishing function near $x=p$.
Substituting \eq{x3.8} into \eq{x3.7} and using the polar coordinates $x_1=r\cos\theta, x_2=r\sin\theta$, we obtain
\be\lb{x3.9}
\tau(p)=\lim_{r\to0}\int^{-2\pi}_0|u|^2 f(|u|^2)r\pa_r v\,\dd\theta=4\pi m,
\ee
by virtue of \eq{aa3.14}.
In view of \eq{x3.9}, we see that \eq{5.31} renders us the result
\be\lb{5.35}
\tau=\sum_{i=1}^N \tau(p_i)=4\pi N.
\ee

With the results \eq{5.30} and \eq{5.35}, we can integrate \eq{a3.12} to get
\be
E=\int_S {\cal H}(A,u)\,\dd\sigma=2\pi (M+N)
\ee
as asserted.

The proof of Theorem \ref{thm1} is complete.

\section{Existence and uniqueness theorem on the full plane}

From the condition \eqref{3.2}, we observe that a surface with a larger volume can support a greater range of the difference between the number of vortices $M$ and the number of antivortices $N$. Especially, this implies that for a surface with infinite volume, the two numbers $M$ and $N$ may be arbitrary. In this section, we shall present such a result on $\mathbb{R}^{2}$  under the technical conditions
\bea
&&4s f(s)\leq 1+s,\quad s\in [0,1],\lb{C1}\\
&&sf(s)+\frac1sf\left(\frac1s\right)\geq1,\quad s\in (0,1].\lb{C2}
\eea
It is clear that these conditions cover the classical model \eq{a3.11} as a special case. In fact, for \eq{a3.11}, the condition \eq{C1} holds as an inequality and the condition \eq{C2} as an equality.

Our existence and uniqueness results for the solutions to the Bogomol'nyi equations \eqref{2.13} and \eqref{2.14} over the full plane, assuming the conditions \eq{C1} and \eq{C2}, are stated as follows.

\begin{theorem}\label{thm2}
Given any $M$  points  $q_{1}, \ldots, q_{M}$  and  $N$  points  $p_{1}, \ldots, p_{N}$ in $\mathbb{R}^{2}$, the Bogomol'nyi equations \eqref{2.13} and \eqref{2.14}
governed by the Hamiltonian density \eq{2.6} over $\mathbb{R}^{2}$ where the coupling functions $w$ and $F$ are related through \eq{2.11} have a unique  finite-energy solution $(A, u)$,
up to gauge-transformation equivalence, representing $M$ vortices at the points $q$'s and $N$ antivortices at the points $p$'s so that the $q$'s and $p$'s are the
zeros and poles of $u$, respectively, with algebraic multiplicities.  As $|x| \to \infty$, the solution approaches the vacuum state  with spontaneously broken symmetry characterized by
$F_{12}=0$ and $|u|=1$. Under the condition  $F(1)>0$, these limits are achieved exponentially fast at the rate
\be
|u|^2-1,\quad \left|D_{j} u\right|, \quad F_{12}={\rm{O}}(\e^{-\sqrt{F(1)}(1-\vep)|x|}), \quad |x|\gg1,
\ee
where $\vep>0$ is an arbitrarily small number, and
the total magnetic flux or the Chern-class charge, the Thom-class charge, and minimal energy have the  quantized values,
\be
\int_{\bfR^{2}} F_{12}\,\dd x=2 \pi(M-N), \quad \int_{\bfR^2}J_{12}\,\dd x=4\pi N, \quad E=\int_{\bfR^2}{\cal H}\,\dd x=2\pi (N+M),
\ee
respectively.
\end{theorem}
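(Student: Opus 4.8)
The plan is to reduce the Bogomol'nyi system on $\mathbb{R}^2$ to the single scalar equation \eqref{3.3}, regularize the source terms, and then solve the resulting semilinear elliptic equation by the method of sub- and supersolutions. Concretely, one introduces a smooth background function $v_0$ (for instance $v_0(x)=\sum_{s}\ln(1+\mu|x-q_s|^{-2})^{-1}\cdots$, or more plainly a sum of terms of the form $-\ln(1+|x-q_s|^2)$ near the $q$'s and $+\ln(1+|x-p_i|^2)$ near the $p$'s, each chosen so that $\Delta v_0$ is a smooth $L^1\cap L^\infty$ function plus the correct Dirac masses $4\pi\sum\delta_{q_s}-4\pi\sum\delta_{p_i}$) so that the substitution $v=v_0+\varphi$ turns \eqref{3.3} into
\begin{equation}
\Delta\varphi = 4\,\e^{v_0+\varphi} f(\e^{v_0+\varphi}) - 2 - g_0(x),\qquad g_0=\Delta v_0\in C^\infty(\mathbb{R}^2),
\end{equation}
where $g_0$ is smooth and decays like $|x|^{-2}$ (hence lies in $L^p$ for $p>1$ but not $p=1$ — one must be careful here, and it may be cleaner to choose $v_0$ with $\Delta v_0$ compactly supported, accepting that then $v_0$ only has the right logarithmic behavior, which is in fact all that is needed). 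The target function space is $W^{1,2}(\mathbb{R}^2)$ (or $W^{2,p}_{loc}$ with decay), and the analytical backbone is the fact that $-\Delta+1$ is invertible on $\mathbb{R}^2$ with a positive Green's function, so the iteration scheme preserves ordering.

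The key steps, in order, are: (i) construct a supersolution $\overline{\varphi}$ and a subsolution $\underline{\varphi}$ with $\underline{\varphi}\le\overline{\varphi}$. The natural supersolution is $\overline{\varphi}\equiv 0$ (or a small constant): using the bound $sf(s)\le\tfrac12$ for $s\le 1$ coming from \eqref{aa3.10}--\eqref{aa3.11} — more precisely the monotonicity of $sf(s)$ and $f(1)=\tfrac12$ — together with condition \eqref{C1}, one checks that away from the $q$'s the quantity $4\e^{v_0}f(\e^{v_0})-2-g_0$ has the right sign. For the subsolution one solves an auxiliary linear problem $\Delta\underline{\varphi}=2+\|g_0\|$ -type comparison, or takes $\underline{\varphi}=-K\e^{v_0}$-style barrier; condition \eqref{C2}, the reflection-symmetry inequality $sf(s)+s^{-1}f(s^{-1})\ge 1$, is exactly what guarantees that the pole-singularities are subcritically balanced so a subsolution exists on all of $\mathbb{R}^2$ rather than forcing the imbalance bound \eqref{3.2}. (ii) Run monotone iteration $\Delta\varphi_{k+1}-\varphi_{k+1}=h(v_0+\varphi_k)-\varphi_k$ between $\underline{\varphi}$ and $\overline{\varphi}$, where $h(v)=4\e^v f(\e^v)-2-g_0$, obtaining a monotone sequence converging to a solution $\varphi$ in $W^{2,p}_{loc}$; elliptic regularity then gives smoothness. (iii) Show $\varphi\in W^{1,2}(\mathbb{R}^2)$ and $\varphi\to 0$ at infinity: multiply the equation by $\varphi$ and integrate, using that $h$ is monotone and that $\varphi$ is squeezed between two functions vanishing at infinity, to get the energy bound; then a standard elliptic decay / Kato-inequality argument upgrades $L^2$-decay to pointwise decay.

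The analytically delicate part — and the step I expect to be the main obstacle — is the sharp exponential decay estimate $|u|^2-1,\ |D_j u|,\ F_{12}=\mathrm{O}(\e^{-\sqrt{F(1)}(1-\varepsilon)|x|})$ under $F(1)>0$. Once $|u|\to 1$ one linearizes \eqref{2.13}: writing $|u|^2=\e^v$ and $v=\varphi$ far out, the equation becomes $\Delta\varphi = 4\e^\varphi f(\e^\varphi)-2 = -2w(\e^\varphi) \approx -2w'(1)\varphi = F(1)\varphi$ for small $\varphi$ (using $F=-2w'$ from \eqref{2.12}), so the decay rate is governed by $\sqrt{F(1)}$; one then runs a comparison argument with the explicit radial supersolution $C\e^{-\sqrt{F(1)}(1-\varepsilon)|x|}$ to $(-\Delta + F(1)(1-\varepsilon)^2)$ on the exterior of a large disk. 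The derivative estimates for $D_j u$ and $F_{12}$ then follow by differentiating the equations and interior Schauder/gradient estimates, or via \eqref{x3.5}--\eqref{x3.6} which express $D_j u$ through $\partial v$. Finally, the quantized charges and energy are not new work: the flux $\int F_{12}=2\pi(M-N)$ follows by integrating \eqref{3.3} against the decay of $\varphi$, the Thom charge $\int J_{12}=4\pi N$ is the local pole computation \eqref{x3.7}--\eqref{x3.9} already carried out, and $E=2\pi(M+N)$ follows from \eqref{a3.12} together with these two topological integrals — exactly as in the compact case, but now justified because the finite-energy condition and the decay make all the integrations by parts on $\mathbb{R}^2$ legitimate. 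Uniqueness again reduces to the maximum principle via monotonicity of $sf(s)$, applied this time on $\mathbb{R}^2$ using the decay at infinity to rule out a nontrivial bounded difference of two solutions.
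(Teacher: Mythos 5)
Your overall strategy --- reduce to the scalar equation \eqref{a6.5} with $v\to0$ at infinity, solve by sub- and supersolutions, read off the decay rate $\sqrt{F(1)}$ from the linearization $\Delta\varphi\approx F(1)\varphi$, and then compute the flux, Thom charge and energy using that decay --- is the same as the paper's, and your treatment of the decay estimates, the three quantized integrals, and uniqueness via monotonicity of $sf(s)$ all match the paper's proof. The genuine gap is in the one step that carries the real content: producing an ordered pair of barriers that work simultaneously near the $q$'s, near the $p$'s, and at infinity. Your proposed supersolution $\overline{\varphi}\equiv0$ fails in general: the required inequality is $0\le 4\e^{v_0}f(\e^{v_0})-2-g_0=-2w(\e^{v_0})-g_0\approx F(1)v_0+{\rm O}(|x|^{-4})$ near infinity, and for any standard decaying background the tail of $v_0$ has the sign of $N-M$, so the inequality is violated on a neighborhood of infinity whenever $M>N$ (while the conditions you need at the $q$'s and at the $p$'s pull the regularization parameter in opposite directions; and with a compactly supported $\Delta v_0$ the background only has logarithmic behavior, so $\varphi$ cannot even be bounded). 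Your subsolution candidates fare no better: a solution of $\Delta\underline{\varphi}=\mbox{const}>0$ grows quadratically and cannot sit below a decaying solution, and a barrier of the form $-K\e^{v_0}$ has $\Delta(-K\e^{v_0})\sim-4Km^2|x-p|^{-2m-2}\to-\infty$ at a pole of order $m$, where the right-hand side of the equation stays bounded, so the subsolution inequality fails there.

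The idea missing from your sketch --- and the reason the hypotheses \eqref{C1} and \eqref{C2} have the exact form they do --- is that the classical Taubes solutions themselves serve as barriers for the full singular equation, Dirac masses included. Let $v_1<0$ be the unique decaying solution of \eqref{a6.7} with sources $4\pi\sum\delta_{p_s}$, and $v_2<0$ the corresponding solution with sources at the $q$'s. Condition \eqref{C1}, rewritten as $\e^{v}-1\ge 4\e^{v}f(\e^{v})-2$ for $v\le0$, shows at once that $v_-=v_2$ is a subsolution of \eqref{a6.5} (the $-4\pi\sum\delta_{p_s}$ is dropped with the correct sign), while \eqref{C1} followed by \eqref{C2} applied at $s=\e^{v_1}\le1$ shows that $v_+=-v_1>0$ is a supersolution, the singular sources at the $p$'s matching exactly and those at the $q$'s being added for free. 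So \eqref{C1} and \eqref{C2} are not a vague ``subcritical balance'' at the poles; they are precisely the two pointwise inequalities that turn the known solvability of the Taubes equation into a three-line barrier construction. If you insist on the regularize-and-iterate framework you would have to engineer barriers doing the same job by hand, which is substantially harder than your sketch suggests.
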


Note that the conditions \eq{C1} and \eq{C2} are imposed for convenience of proof with attention that they accommodate some familiar nonlinear coupling functions, including a few
 classical situations. These conditions may be replaced by other conditions which may require more effort in solution construction. In the next section, we will comment on some of these issues.

The proof of Theorem \ref{thm2} will be centered around the equation \eq{3.3} over $\bfR^2$ or
\begin{equation}\label{a6.5}
\Delta v=4\e^v f(\e^v)-2+ 4\pi\sum_{s=1}^{M}\delta_{q_{s}}(x)-4\pi \sum_{s=1}^{N}\delta_{p_{s}}(x), \quad x\in \bfR^2,
\end{equation}
subject to the boundary condition
\be\lb{a6.6}
v(x)\to0,\quad |x|\to\infty.
\ee

Since the domain is now the full plane such that there is no more restriction that confines the numbers of vortices and antivortices in terms of the size of the domain, the treatment of the problem eases greatly, yet rendering strong existence and unique results under general conditions, as stated in the theorem. To see this, we proceed as follows.

Consider the Taubes equation \cite{Gud,Manton,T1}
\be\lb{a6.7}
\Delta v =(\e^v-1)+4\pi\sum_{s=1}^N \delta_{p_s}(x),\quad x\in\bfR^2.
\ee
It has been shown \cite{JT,T1} that \eq{a6.7} has a unique solution subject to \eq{a6.6} which satisfies $v<0$ everywhere. Use $v_1$ to denote such a solution. Then by \eq{C1} we have 
\bea
\Delta(-v_1)&=&(1-\e^{v_1})-4\pi\sum_{s=1}^N\delta_{p_s}\nn\\
&\leq&2-4\e^{v_1} f(\e^{v_1})-4\pi\sum_{s=1}^N\delta_{p_s}\nn\\
&\leq&4\e^{-v_1}f(\e^{-v_1})-2-4\pi\sum_{s=1}^N\delta_{p_s}+ 4\pi\sum_{s=1}^{M}\delta_{q_{s}},
\eea
using \eq{C2} in the last step. This establishes $v_+=-v_1$ as a positive supersolution to \eq{a6.5} subject to \eq{a6.6}.

To obtain a subsolution in the same manner, we again consider \eq{a6.7} subject to \eq{a6.6} with $p$'s being replaced by $q$'s and $N$ by $M$ as given in \eq{a6.5}, and use
$v_2$ to denote such a solution. Then $v_2$ satisfies
\bea
\Delta v_2&=&(\e^{v_2}-1)+4\pi\sum_{s=1}^M \delta_{q_s}\nn\\
&\geq&4\e^{v_2}f(\e^{v_2})-2 +4\pi\sum_{s=1}^M \delta_{q_s}-4\pi\sum_{s=1}^N\delta_{p_s},
\eea
using \eq{C1} alone. That is, $v_-=v_2$ is a negative subsolution to \eq{a6.5} subject to the boundary condition \eq{a6.6}.

As a consequence of the theory of nonlinear elliptic differential equations, we know that \eq{a6.5} subject to \eq{a6.6} has a solution $v$ satisfying
\be
v_-<v<v_+
\ee
everywhere in $\bfR^2$. By the monotonicity condition \eq{aa3.11}, we know that such a solution must be unique.

Using \eq{a6.6}, we see that the solution $v$ and any of its partial derivative $\pa_j v$ ($j=1,2$) all satisfy the linearized equation
\be
\Delta W=4(f(1)+f'(1))=F(1) W,
\ee
 near infinity of $\bfR^2$ in view of the condition \eq{2.11}. Thus, assuming $F(1)>0$ and using a standard maximum principle argument, we can obtain the sharp exponential asymptotic estimate
for $W$:
\be
|W(x)|\leq C(\vep)\e^{-(1-\vep)\sqrt{F(1)}|x|},\quad |x|>R,
\ee
where $R>0$ is sufficiently large, $\vep>0$ may be arbitrarily small, and $C(\vep)>0$ is a constant depending on $\vep$. Applying this estimate to $v$ and its derivatives, we get
\begin{equation}\lb{6.13}
\left|v(x)\right|\leq C(\vep)\e^{-(1-\vep)\sqrt{F(1)}\left|x\right|}, \quad \left|\nabla v(x)\right|\leq C(\vep) \e^{-(1-\vep)\sqrt{F(1)}|x|},\quad |x|>R.
\end{equation}

In view of \eq{2.12}, \eq{2.13}, \eq{x3.5},  \eq{x3.6}, and \eq{6.13}, we have
\be\lb{6.14}
|u|^2-1,\ \ 
|D_j u|,\ \  j=1,2, \ \  F_{12}=\mbox{O}(\e^{-(1-\vep)\sqrt{F(1)}|x|}).
\ee

With \eq{6.14}, we can calculate various quantities as in the compact situation since the boundary terms will not make contribution. 

Thus the theorem is established.

\medskip

There are plenty of models covered by the conditions \eq{C1} and \eq{C2} which are of potential interest for phenomenological applications. For example, here is one:
\be\lb{6.15}
w(s)=\frac{1-s^m}{1+s^m},\quad F(s)=\frac{4m s^{m-1}}{(1+s^m)^2}, \quad m=1,2,\dots.
\ee
The link function $f(s)$ defined in \eq{2.11} leads to
\be\lb{6.16}
sf(s)=\frac{s^m}{1+s^m}.
\ee
With \eq{6.16}, we can examine to see that \eq{C1} is satisfied as an inequality and \eq{C2} is satisfied as an equality again. Inserting \eq{6.15} into \eq{2.6}, we obtain the Abelian 
Higgs theory
\be\lb{6.17}
{\cal H}(A,u)=\frac12 *(F_A\wedge *F_A)+\frac{2m|u|^{2(m-1)}}{(1+|u|^{2m})^2}*(D_A u\wedge *\overline{D_A u})+\frac12\left(\frac{1-|u|^{2m}}{1+|u|^{2m}}\right)^2,
\ee
with a Bogomol'nyi structure such that its vortex equation \eq{2.13} reads
\be\lb{6.18}
*F_A= \pm \frac{1-|u|^{2m}}{1+|u|^{2m}}.
\ee
It is clear that we may choose the integer $m$ to be arbitrarily large to make the vorticity field $*F_A$ in \eq{6.18} be as locally concentrated as we please around the zeros and poles of $u$.
Such a mechanism may be used in the setting of cosmic strings to render high curvature lumps for the gravitational sector governed by the Einstein equations \cite{Y1,Y2,Ybook,Y4} in the study
of galaxy formation problem in the early universe \cite{Gar,K,V,VS}.

\section{Remarks}
\setcounter{equation}{0}

We note that the condition \eq{C1} is imposed in order to use the solution to \eq{a6.7} as a subsolution to \eq{a6.5}. This condition can actually be removed but some additional effort has to be made
such that a subsolution can be constructed directly.  Here we omit these technical details.

Moreover, the conditions \eq{C1} and \eq{C2} may also be replaced with some other conditions to ensure the validity of Theorem \ref{thm2}. For example, using a direct minimization
method as in \cite{HHY}, the same results hold under the condition
\be\lb{7.1}
\int_0^v \left(2\e^sf(\e^s)-1\right)\,\dd s\geq \lm \ln\left(\frac{\cosh v+1}2\right),
\ee
where $\lm>0$ is some constant. It can be examined that the classical case \eq{a3.11} corresponds to $\lm=1$ in \eq{7.1}.

Recall that 
 the Gauss energy, or nonlinear sigma model energy,
\be
E(u)=\int_{\bfR^2} J |\nabla u|^2 \, \dd x,
\ee
is the foundational Hamiltonian for describing topological excitations  in the forms of vortices and antivortices in the 2D XY model, where $u(x) \in S^1 \subset \mathbb{C}$ and $J>0$ is a constant.
It arises as the continuum limit of the lattice model with nearest-neighbor spin interaction, known as the Ising model, and vortex configurations minimize this energy subject to topological constraints.
Although in the standard XY model, the modulus of the spin field is constrained, $|u|=1$,  but in generalized or relaxed models, such as the Ginzburg--Landau-type extensions, $|u|$ may deviate from unity, allowing for amplitude fluctuations. This makes the model closer to the Abelian Higgs model, or the complex Ginzburg--Landau model, where both the phase and amplitude are dynamical and affect vortex structure. Thus, the extended Hamiltonian
\be
E_F(u) = \int_{\mathbb{R}^2} J F(|u|^2) |\nabla u|^2 \, \dd x,
\ee
with some positive function $F: \mathbb{R}_+ \to \mathbb{R}_+$
 introduces amplitude-dependent modulation of the energy such that $F$ enriches the model in several aspects including (i)
If $F(s)\to\infty$ 
as $s\to0$, or $F(s)\to F_0$ as $s\to0$ and $F_0>F(s)$ for any $s>0$, then the energy penalizes vanishing amplitude or suppressing singular vortex cores.
(ii)
If $F(s)$ increases with $s$, the system penalizes large amplitudes.
(iii)
If $F(s)\sim1$, the model approaches the classical XY case. (iv) In the classical XY model, the vortices are singular -- the energy density diverges logarithmically at the core. In models with amplitude freedom and nontrivial $F(|u|^2)$, the vortex cores can become regularized, i.e., the field $u$ can go to zero at the core, softening the singularity.

In view of these,
we see that $F$ can serve to interpolate a wide variety of phenomena. 
It is in this background that the Hamiltonian \eq{2.6} arises
naturally and relevantly.

\medskip

The authors would like to thank an anonymous referee whose comments and suggestions help improve the presentation of this work.

\medskip

 {\bf Declaration of interests.} The authors declare that they have no known competing financial interests or personal relationships that could have appeared to influence the work reported in this paper.

\end{document}